\definecolor{redd}{rgb}{0.95,0.2,0.2}
\definecolor{gris}{rgb}{0.9,0.9,0.9}
\definecolor{greenn}{rgb}{0.1,0.6,0.2}
\definecolor{cmgray}{rgb}{0.7,0.7,0.7}
\newcommand{\tmgray}[1]{\textcolor{cmgray}{#1}}
\definecolor{cmblue}{rgb}{0.2,0.5,0.8}
\newcommand{\beq}{\begin{eqnarray*}}
\newcommand{\eeq}{\end{eqnarray*}}
\renewcommand{\theequation}{\thesection.\arabic{equation}}
\def\eqnarray{%
\stepcounter{equation}%
\let\@currentlabel=\theequation
\global\@eqnswtrue
\global\@eqcnt\z@
\tabskip\@centering
\let\\=\@eqncr
$$\halign to \displaywidth\bgroup\@eqnsel\hskip\@centering
$\displaystyle\tabskip\z@{##}$&\global\@eqcnt\@ne
\hfil$\displaystyle{{}##{}}$\hfil
&\global\@eqcnt\tw@$\displaystyle\tabskip\z@{##}$\hfil
\tabskip\@centering&\llap{##}\tabskip\z@\cr}
\newtheorem{theorem}{Theorem}[section]
\newtheorem{lemma}[theorem]{Lemma}
\newtheorem{proposition}[theorem]{Proposition}
\newtheorem{remark}{Remark}[section]
\newtheorem{definition}[theorem]{Definition}
\newsavebox{\toy}
\savebox{\toy}{\framebox[0.65em]{\rule{0cm}{1ex}}}
\newcommand{\QED}{\usebox{\toy}}
\def\nlni{\par\ifvmode\removelastskip\fi\vskip\baselineskip\noindent}
\newenvironment{proof}{\nlni\begingroup\it Proof.\rm}{
\endgroup\vskip\baselineskip}
\newcommand{\supp}{\mathop{\mathrm{supp}}\nolimits}
\newcommand{\esm}[1]{\mathbf{E}\left[#1\right]}
\begin{document}
\setlength{\baselineskip}{15pt}
\title{
Clock statistics for 1d Schr\"odinger operators
}
\author{
Victor Chulaevsky
\thanks{
D\'{e}partement de Math\'{e}matiques,
Universit\'{e} de Reims, Moulin de la Housse, B.P. 1039
51687 Reims Cedex 2, France.
E-mail: victor.tchoulaevski@univ-reims.fr
}
\and
Fumihiko Nakano
\thanks{
Department of Mathematics,
Gakushuin University,
1-5-1, Mejiro, Toshima-ku, Tokyo, 171-8588, Japan.
E-mail :
fumihiko@math.gakushuin.ac.jp}
}
\maketitle
\begin{abstract}
We study
the 1d Schr\"odinger operators with alloy type random supercritical decaying potential and prove the clock convergence for the local statistics of eigenvalues.
We also consider,
besides the standard i.i.d. case, more general ones with exponentially decaying correlations.
\end{abstract}

Mathematics Subject Classification (2000): 
34F05, 60G42

\section{Introduction}
The level statistics
problem for 1d Schr\"odinger operators with random decaying potentials were studied by many
researchers, and various interesting results have been obtained
(cf., e.g., \cite{ALS}, \cite{KS}, \cite{KN1}, \cite{KN2}, \cite{MD}, \cite{N}).
Usually, one introduces local Hamiltonians 
$H_n$ 
in intervals of size 
$n$,
and  considers the point process 
$\xi_n$ 
generated by of the suitably rescaled eigenvalues of 
$H_n$.
Killip and Stoiciu \cite{KS}
showed that for CMV matrices weakly 
$\xi_n$ 
converges to the clock process,
limit of circular $\beta$-ensemble, and the Poisson process
for the supercritical, critical, and subcritical cases, respectively.
For the Schr\"odinger operator,
similar results are obtained by Avila \emph{et al.}
\cite{ALS} (supercritical discrete model), Krichevski \emph{et al.} \cite{KVV} (critical discrete model)
and by Kotani and Nakano \cite{KN1}, \cite{KN2}, \cite{N}
(the continuous model where the random potential is a function of the Browninan motion on a torus).
The aim of the present paper
is to prove the clock convergence (i.e., convergence to the clock process)
for the Schr\"odinger operator with the alloy type potential,

%
%
\beq
H &=& - \frac {d^2}{dt^2} + V(t),
\quad
V(t) =
\sum_{j \in {\bf Z}}
\frac {\omega(j)}{j^{\alpha}}
f(t-j) \,,
\eeq
where
$\alpha > \frac 12$,
$f \in L^{\infty}$
with
$\supp f \subset [0, 1]$,
and the amplitudes 
$\omega_j$ 
of the potential on the cells $[j, j+1)$ are
i.i.d. or, more generally, form a stochastic process with exponentially decaying correlations.

Specifically, 
we consider one of the following two cases.
\\
{\bf A (i.i.d.): }
$\omega(j)$
are i.i.d., $|\omega(j)| \le 1$,
with ${\bf E}[ \omega (j) ] = 0$ and $\esm{\omega^2(j)}>0$.
Then by
\cite{KLS},
$H$
has a.s. purely a.c. spectrum on $[0, \infty)$.
\\
{\bf B (exponentially decaying correlation): }
$\{ \omega (j) \}_{j=1}^{\infty}$
is a bounded stochastic process such that
$| \omega (j) | \le 1$,
adapted to a filtration
$\{ {\cal F}_j \}_{j=1}^{\infty}$,
with exponentially decaying correlations :
\beq
\left|
{\bf E}[
\omega (j) | {\cal F}_k ]
\right|
\le
e^{- \rho |j - k|},
\quad
k < j,
\quad
\rho > 0.
\eeq
Clearly, {\bf A} is obtained as a special case of
{\bf B} by setting
${\cal F}_n$
to be the
$\sigma$-algebra
generated by
$\{ \omega (j) \}_{j=-\infty}^n$.
\begin{remark}
\label{ExampleB}
Examples of
\textbf{B} are provided,
e.g., by the following framework: let
$(\omega, {\cal F}, {\bf P}, T)$
be an ergodic dynamical system with discrete time
${\bf N}$
or
${\bf Z}$
admitting a finite Markov partition
$\Omega = C_1 \cup \cdots \cup C_M$,
$M>1$.
Pick a vector
$f = (f(1), \ldots, f(M) \in {\bf R}^M$
and
set
\beq
F(\theta)
&:=&
\sum_{i=1}^M f(i) 1_{C_i}(\theta) \,,
\\
\omega(j) &=& \omega(j,\theta) := F(T^j \theta) \,,
\eeq
where
$1_{C_i}(\cdot)$
is the indicator function of the partition element
$C_i$.
Then
the stochastic process
$\big\{ \omega(j) \big\}_{j=1}^\infty
=
\big\{ \omega(j,\theta) \big\}_{j=1}^\infty$
satisfies \textbf{B}.
A classical
example is given by
$\Omega = {\bf T}^1 = {\bf R}/{\bf Z}$
equipped with the Haar measure, and
$T$
is the dyadic transformation (an endomorphism)
$T:\, \theta \mapsto \{2\theta\}$
(here $\{\cdot\}$
stands for the fractional part).
Another
well-known example is the so-called natural extension of
the dyadic transformation (to an isomorphism), called Baker's transformation (see its definition in Section \ref{Baker}).
Furthermore,
there are infinitely many examples given by hyperbolic automorphisms of tori
${\bf T}^\nu$,
$\nu \ge 2$
which we also discuss in Section \ref{tori}.

In the present paper
we focus on the case of the most rapid (exponential) decay of correlations, and stress
the fact that the key arguments and estimates rely on bilinear, pair correlations, so they are applicable
even to stochastic processes emerging from deterministic dynamical systems, such as mentioned above.
There are
of course more "stochastic" (viz., non-deterministic) random processes to which our techniques
apply, for example, ergodic Markov chains on a finite or countable phase space, with discrete or continuous
time.
On the other hand,
recall that there are uncorrelated stochastic processes which are not i.i.d.
For example,
one can take the dyadic transformation and set
$\omega(j,\theta) = \cos(2^j \cdot 2\pi \theta)$
; then
${\bf E}[ \omega(j) \omega(k)]=0$
for
$j<k$
by orthogonality of the standard Fourier basis on
$[0, 2\pi]$,
yet
$\omega(j)$
and
$\omega(k)$
are not independent.
Indeed, taking
$j=1$, $k=2$,
we have
${\bf E}[\omega^2(1)]>0$,
${\bf E}[\omega(2)]=0$,
but
${\bf E}[\omega^2(1) \omega(2)]
=
(2\pi)^{-1} \int_0^{2\pi} \cos^2(\theta) \, \cos(2\theta)\, d\theta \ne 0$.

Finally,
note that the condition on the decay rate of pair correlation can be substantially relaxed.
\end{remark}
Let
$H_n$ be the Dirichlet restriction 
$H |_{[0, n]}$ 
of 
$H$ 
on 
$[0,n]$, 
with
$\{ E_j (n) \}_{j \ge j_0}$
being its positive eigenvalues, and
let
$\kappa_j(n) := \sqrt{E_j(n)}$.
Let
$E_0 > 0$
be the reference energy,
$\kappa_0 := \sqrt{E_0}$,
and consider the point process
\[
\xi_n :=
\sum_{j=1}^{\infty}
\delta_{n (\kappa_j(n) - \kappa_0)}.
\]
In the case 
of the free Laplacian, the atoms of 
$\xi_n$ 
are explicitly given by 
$\left\{ 
j \pi - n \kappa_0
\right\}_j$ 
so that to have the convergence of 
$\xi_n$, 
$n \kappa_0$ 
needs to converge up to 
$\pi$ 
: we have to consider a suitable subsequence 
$\xi_{n_k}$ 
of such point processes on intervals 
$[0, n_k]$. 
This is 
also the case in general which we henceforth assume except 
Theorem \ref{strongclock}.\\
\noindent
{\bf Assumption S} : 
A subsequence
$\{ n_k \}_{k=1}^{\infty}$
satisfies
\[
\kappa_0 n_k = m_k \pi + \beta + o(1),
\quad
k \to \infty
\]
for some
$m_k \in {\bf N}$
and
$\beta \in [0, \pi)$.
\begin{theorem}
\label{clock1}
%
Assume {\bf A} and {\bf S}.
Then
there exists a probability measure
$\mu_{\beta}$
on
$[0, \pi)$
such that
\[
\lim_{k \to \infty}
{\bf E}[ e^{- \xi_{n_k}} (g) ]
=
\int_0^{\pi}
d \mu_{\beta} (\phi)
\exp \left(
- \sum_{j \in {\bf Z}}
g( j \pi - \phi )
\right),
\quad
g \in C_c({\bf R}).
\]
\end{theorem}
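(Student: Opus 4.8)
The key object is the Prüfer-type representation of solutions. For energy $E = \kappa^2 > 0$, write the solution of $H_n u = E u$ satisfying the left Dirichlet condition in polar (Prüfer) coordinates $(R(t,\kappa), \theta(t,\kappa))$, so that an eigenvalue $\kappa_j(n)$ is characterized by the condition $\theta(n, \kappa_j(n)) \equiv 0 \pmod \pi$. The plan is to analyze the process $\kappa \mapsto \theta(n, \kappa)$ on the microscopic scale $\kappa = \kappa_0 + x/n$. For the free Laplacian $\theta(n,\kappa) = n\kappa$, which under Assumption S equals $m_k \pi + \beta + o(1) + x$, explaining the clock atoms $\{j\pi - \phi\}$ with $\phi = \pi - \beta$. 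For the perturbed operator, I would derive the Prüfer phase equation $\frac{d\theta}{dt} = \kappa - \kappa^{-1} V(t) \sin^2\theta$ (modulo the standard form), integrate it, and show that the random potential contributes a correction $\Theta_n(\kappa)$ to $\theta(n,\kappa)$. Because $\alpha > \tfrac12$ (supercriticality), I expect $\Theta_n(\kappa)$ to converge — after subtracting a deterministic drift absorbed into $m_k\pi + \beta$ — to a \emph{nontrivial random limit} $\Theta_\infty$ that is \emph{independent of $x$} on the microscopic scale (the increments over a window of width $O(1/n)$ in $\kappa$ vanish). This rigidity is exactly what forces the clock picture.

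The main steps I would carry out: (1) Set up the Prüfer variables and write $\theta(n, \kappa_0 + x/n) = n\kappa_0 + x + \Theta_n(\kappa_0) + o(1)$ uniformly for $x$ in compact sets — the error being controlled by the derivative $\partial_\kappa \theta$, which one bounds using that the amplitude process $R$ stays bounded above and below a.s. (again a consequence of $\alpha > \tfrac12$ and the square-summability of the effective noise). (2) Show $\Theta_n(\kappa_0) \pmod \pi$ converges in distribution to some random variable with law $\mu_\beta$ on $[0,\pi)$; here the i.i.d. assumption and the decay $j^{-\alpha}$ let one write $\Theta_n$ as a sum of a martingale plus a remainder, apply a martingale convergence / CLT-type argument, and use the oscillatory factors $e^{2i\theta}$ to kill the diagonal contribution that would otherwise diverge. (3) Conclude: conditionally on $\Theta_\infty = \theta_*$, the rescaled eigenvalues $n(\kappa_j(n) - \kappa_0)$ converge to the deterministic clock configuration $\{j\pi - \phi\}$ with $\phi$ determined by $\beta$ and $\theta_*$, so the Laplace functional $\esm{e^{-\xi_{n_k}}(g)}$ converges to the stated average over $\mu_\beta$ by dominated convergence (using $g \in C_c$ so only finitely many atoms contribute).

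The hard part will be Step (2): establishing that the Prüfer phase correction genuinely converges (rather than merely being tight) and identifying the limiting law $\mu_\beta$, while simultaneously controlling the $\kappa$-derivative uniformly enough in Step (1) that the microscopic window collapses to a single point. The delicate point is the interplay between the slowly decaying amplitude $j^{-\alpha}$ with $\alpha$ only slightly above $\tfrac12$ and the oscillatory cancellations: one must show that $\sum_j \frac{\omega(j)}{j^\alpha} e^{2i\theta(j,\kappa)}$ and its $\kappa$-derivative converge, which requires estimating $\partial_\kappa \theta(j,\kappa) = O(j)$ and hence that the differentiated series behaves like $\sum_j \frac{\omega(j)}{j^{\alpha-1}} e^{2i\theta}$ — borderline divergent, so one genuinely needs the probabilistic cancellation from $\esm{\omega(j)}=0$ and (in case B) the exponentially decaying correlations to gain back the lost power. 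I would handle this via a summation-by-parts / stochastic-calculus estimate on the martingale transform, exactly the "bilinear pair-correlation" mechanism flagged in Remark \ref{ExampleB}.
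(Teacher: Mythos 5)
Your proposal follows the same overall route as the paper: reduce eigenvalue statistics to the Pr\"ufer phase via Sturm oscillation, establish rigidity of the relative phase $\Theta^{(n)}(c) = \theta_n(\kappa_0 + c/n) - \theta_n(\kappa_0) \to c$ on the microscopic scale (Proposition \ref{Theta}), show that the centered phase $\tilde\theta_n(\kappa_0)$ converges a.s.\ to a random limit whose law mod $\pi$ furnishes $\mu_\beta$ (Proposition \ref{thetatilde}), and pass to the limit in the Laplace functional via the inverse-function lemma (Lemma \ref{inverse}). The substantive difference is in how the $\kappa$-variation is controlled, and this is where your Step~(1) has a soft spot: bounding $\partial_\kappa\theta_n$ using only boundedness of the Pr\"ufer amplitude gives $\partial_\kappa\theta_n = O(n)$, which multiplied by the window width $c/n$ yields $O(1)$, not $o(1)$. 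To get rigidity one must show $\partial_\kappa\theta_n - n = o(n)$ \emph{uniformly} over the shrinking window, which is not immediate from pointwise amplitude convergence. The paper instead proves H\"older continuity in $\kappa$ with exponent $\eta < \alpha - \tfrac12$ (not $1$) of the oscillatory functional $J^{(t)}(\kappa) = \int_0^t e^{2i\theta_s(\kappa)}V(s)\,ds$ in $p$-th moment via BDG (Proposition \ref{Jp}), giving
${\bf P}\big(|\Theta^{(n)}(c)-c|\ge\epsilon\big) \le C\epsilon^{-p}(c/n)^{p\eta}$;
taking $p$ large so that $p\eta>1$, Borel--Cantelli yields a.s.\ convergence along the \emph{full} sequence in $n$, which is exactly what distinguishes case {\bf A} from case {\bf B}. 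This is precisely the borderline divergence you flag at the end: the cure is a fractional (H\"older) estimate on the increments $\triangle J$, not a differentiated series, and your summation-by-parts on the martingale transform has to be recast in that form --- the paper's way is to decompose $J^{(m,n)}$ into a true martingale transform (freezing $\tilde\theta$ at the cell endpoint so BDG applies) plus a deterministically small remainder. Your Step~(2) should likewise be a.s.\ convergence of $\tilde\theta_n$, though convergence in law would also suffice for the theorem.
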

\begin{theorem}
\label{clock2}
Assume
{\bf B} and {\bf S}.
Then
the statement of Theorem \ref{clock1} 
remains valid 
if we take subsequence of 
$\xi_{n_k}$ 
further. 
\end{theorem}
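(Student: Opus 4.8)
The plan is to bootstrap from the i.i.d. case (Theorem \ref{clock1}) by isolating exactly where independence was used and replacing it with the pair-correlation bound in Assumption \textbf{B}. The mechanism producing clock convergence is the Pr\"ufer-variable (or transfer-matrix phase) analysis: one writes the eigenvalue counting in terms of the Pr\"ufer angle $\theta_n(\kappa,\omega)$ accumulated along $[0,n]$, and the event that $n(\kappa_j(n)-\kappa_0)$ lies near a prescribed point translates, via the boundary condition at $n$, into a statement about $\theta_n$ modulo $\pi$. Under supercriticality ($\alpha>\tfrac12$) the relevant martingale-type sum $\sum_j \omega(j) j^{-\alpha}(\text{oscillatory factor})$ has summable variance, so the fluctuating part of $\theta_n$ converges, and the limiting law of $\theta_n \bmod \pi$ is the measure $\mu_\beta$. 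For step one I would write out, following the argument behind Theorem \ref{clock1}, the explicit decomposition of $\theta_{n}(\kappa) - \theta_{n}(\kappa_0)$ into a deterministic linear-in-$n$ part, a genuinely convergent random series $S_\infty(\omega)$, and an error term that is $o(1)$ uniformly on compact $\kappa$-sets and in probability.

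Step two is the substantive one: showing that $S_{n}(\omega) := \sum_{j\le n}\omega(j)j^{-\alpha} a_j(\kappa,\omega)$ (and the analogous quadratic-in-$\omega$ corrections coming from the second Born term) still converges almost surely, or at least in distribution along a further subsequence, under \textbf{B} rather than under independence. Here I would replace the martingale/Kolmogorov three-series argument used for \textbf{A} by a direct second-moment estimate: expand $\esm{|S_{n}-S_{m}|^2}$, and bound the off-diagonal terms $\esm{\omega(j)\omega(k)}$ for $k<j$ using $|\esm{\omega(j)\,|\,\mathcal F_k}|\le e^{-\rho|j-k|}$ after conditioning (the factor $a_k(\kappa,\omega)$ is $\mathcal F_k$-measurable up to a controllable error, since it depends on the potential on $[0,k]$). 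The geometric weight $e^{-\rho|j-k|}$ makes the double sum comparable to the diagonal sum $\sum_j j^{-2\alpha}$, which converges because $\alpha>\tfrac12$; hence $S_n$ is Cauchy in $L^2$ and converges in probability, and along a subsequence almost surely. The quadratic Born terms are handled the same way, now using bounds on $\esm{\omega(i)\omega(j)\omega(k)\omega(l)}$, which under \textbf{B} again decay geometrically in the spread of the indices and produce convergent sums under $\alpha>\tfrac12$.

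Step three: having established that, along a suitable subsequence of $\{n_k\}$, the pair $\bigl(\beta_k,\ \theta_{n_k}(\kappa_0)\bmod\pi\bigr)$ converges in law to some $(\beta,\Phi)$ with $\Phi\sim\mu_\beta$, I would feed this back into the Laplace-functional computation exactly as in Theorem \ref{clock1}: conditionally on the limiting phase $\Phi=\phi$, the rescaled eigenvalues converge to the deterministic clock configuration $\{j\pi-\phi\}_{j\in\mathbf Z}$, because the derivative $\partial_\kappa\theta_n \sim n$ and the fluctuation of $\theta_n$ is asymptotically constant in $\kappa$ on the $O(1/n)$ scale. Integrating over $\phi$ against $\mu_\beta$ gives the stated formula. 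The need to "take a subsequence of $\xi_{n_k}$ further" is precisely what buys us the almost-sure convergence of $S_n$ and the convergence of the joint law in step three; the underlying tightness is automatic since the phases live on the compact $[0,\pi)$.

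The main obstacle I anticipate is step two's conditioning argument: the coefficient $a_k(\kappa,\omega)$ is not exactly $\mathcal F_k$-measurable because the Pr\"ufer phase at site $k$ depends, through the recursion, on the whole potential up to $k$, and one must check that replacing it by an $\mathcal F_k$-measurable surrogate costs only an error that is itself summable in second moment. Equivalently, one needs a quantitative, uniform-in-$\kappa$ control of how sensitively $\theta_k$ depends on $\omega(j)$ for $j$ near $k$, so that the geometric correlation decay is not destroyed by this measurability mismatch. Once that sensitivity estimate is in place — it follows from the boundedness $|\omega|\le1$ and the $j^{-\alpha}$ smallness of each increment — the rest is the same bookkeeping as in the i.i.d. proof, with $\sum j^{-2\alpha}<\infty$ doing all the work.
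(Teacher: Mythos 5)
Your plan tracks the paper's strategy in broad outline: work with the Pr\"ufer phase, reduce clock convergence to showing that the oscillatory sum $J^{(m,n)}=\sum_j \omega(j)j^{-\alpha}(\cdot)$ has controllable moments, replace independence by the conditional first-moment bound of \textbf{B}, and take a further subsequence to upgrade $L^2$ control to a.s.\ convergence via Borel--Cantelli. You also correctly flag the central obstacle (the coefficient multiplying $\omega(j)$ is not $\mathcal{F}_j$-measurable, since the Pr\"ufer phase at time $s\in[j,j+1]$ already depends on $\omega(j)$). That is indeed the crux, and the paper resolves it the way you suggest, by introducing a measurable surrogate for the phase. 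But three things in your step two would either fail or are not actually addressed, and together they form a genuine gap.

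First, your ``surrogate'' must anticipate by a growing margin, not just be ``$\mathcal F_k$-measurable up to a controllable error.'' The paper replaces $\tilde\theta_s$ on $[j,j+1]$ by $\hat\theta_j := \tilde\theta_{j-c\log j}$, which lies in $\mathcal F_{j-c\log j}$. The reason the margin must grow like $c\log j$ is that conditioning on $\mathcal F_{j-c\log j}$ yields a decay factor $e^{-\rho c\log j}=j^{-c\rho}$; with a fixed margin one would only get a constant factor, not the extra power of $j$ needed to make the off-diagonal sum converge (cf.\ Proposition~\ref{square}, where the tail contributes $\sum j^{-(\alpha-1+c\rho)}$). Your proposal is silent on this, and without the logarithmic anticipation the double sum does not close. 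Second, your plan to treat ``quadratic Born terms'' via bounds on $\mathbf E[\omega(i)\omega(j)\omega(k)\omega(l)]$ is a detour the paper avoids entirely, and with good reason: Assumption~\textbf{B} gives decay in a conditional first moment, hence only in the \emph{largest} gap among $i<j<k<l$, which is not enough to sum a fourth-order object. The paper never expands $e^{2i\theta_s}$ perturbatively in the potential; it writes $e^{2i\theta_s}=e^{2i\kappa s+2i\hat\theta_j}\bigl(1+(e^{2i(\tilde\theta_s-\hat\theta_j)}-1)\bigr)$ and bounds the parenthesis deterministically by $O(j^{-\alpha}\log j)$ using boundedness of $\omega$ --- no higher moments enter. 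Third, pointwise-in-$n$ $L^2$ bounds are not sufficient; you need a maximal inequality $\mathbf E\bigl[\sup_{m\le n\le N}|J^{(m,n)}|^2\bigr]\lesssim\mathbf E[|J^{(m,N)}|^2]+\mathbf E[D^2]$ (Proposition~\ref{martingale}), itself proved by the same conditioning trick applied to Doob's argument, together with H\"older continuity in $\kappa$, $\mathbf E[\sup_t|\triangle J^{(m,t)}|^2]\le C|\triangle\kappa|^{2\eta}$ (Proposition~\ref{J2}). It is this uniform, quantitative H\"older estimate, rather than mere convergence of a single series, that makes the Borel--Cantelli step produce $\Theta^{(k^\beta)}(c)\to c$ simultaneously for all $c$ along the subsequence $N_k=k^\beta$ with $2\eta\beta>1$. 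Your outline gestures at uniformity but does not supply the mechanism.
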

We believe that
the statement of Theorem \ref{clock2} is actually
true without taking subsequences.
For the moment, 
the problematic technical issue is
the lack of the Burkholder-Davies-Gundy
(BDG, in short) type inequality for the models with correlated amplitudes 
$\omega(j)$ (cf. Assumption {\bf B}).

Resorting to 
subsequences is not necessary, however, if we 
work with another formulation of the problem adopted in \cite{ALS}.
For given
$n$,
rearrange the eigenvalues
$\{ \kappa_k (n) \}$
of
$H_n$
in such a way that
\[
\cdots <
\kappa'_{-1}(n) <
\kappa'_{0}(n) < \kappa_0 < \kappa'_1(n) < \kappa'_2(n) <\cdots
\]
Then one has the following result. 
\begin{theorem}
{\bf (Strong clock behavior)}
\label{strongclock}
Assume {\bf A}.
We then have
\[
(\kappa'_{j+1}(n) - \kappa'_j(n)) n
\stackrel{n \to \infty}{\to} \pi,
\quad
j \in {\bf Z},
\quad
a.s.
\]
\end{theorem}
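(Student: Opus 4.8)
The plan is to work with Prüfer-type variables for the Schrödinger equation $-u'' + Vu = \kappa^2 u$ on $[0,n]$ and track the accumulated Prüfer phase $\theta_n(\kappa)$ at the right endpoint as a function of $\kappa$. For the free operator this phase is exactly $\kappa n$, so the $j$-th positive Dirichlet eigenvalue $\kappa_j(n)$ is characterized by $\theta_n(\kappa_j(n)) = j\pi$, and the consecutive gaps $\kappa_{j+1}(n) - \kappa_j(n)$ are governed by how fast $\theta_n(\cdot)$ sweeps through a $\pi$-interval near $\kappa_0$. Concretely, I would show that for $\kappa$ in a shrinking neighborhood of $\kappa_0$ of size $O(1/n)$,
\[
\theta_n(\kappa) = \kappa n + R_n(\kappa),
\]
where the remainder $R_n$ and, crucially, its $\kappa$-derivative $\partial_\kappa R_n$ are $o(n)$ uniformly on that neighborhood, almost surely. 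Given this, the mean value theorem applied between $\kappa_j(n)$ and $\kappa_{j+1}(n)$ yields
\[
\pi = \theta_n(\kappa_{j+1}(n)) - \theta_n(\kappa_j(n))
= \big(n + \partial_\kappa R_n(\tilde\kappa)\big)\,(\kappa_{j+1}(n) - \kappa_j(n)),
\]
so that $(\kappa_{j+1}(n) - \kappa_j(n))\, n = \pi\,(1 + o(1))$ almost surely, which is the claim. The same estimates show the relevant $\kappa_j(n)$ do lie in the $O(1/n)$-window around $\kappa_0$, so the argument is self-consistent.

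The heart of the matter is therefore the control of $R_n$ and $\partial_\kappa R_n$. Writing the modified Prüfer phase evolution, the increment of $R_n$ over the cell $[j,j+1)$ is, to leading order, a bounded function of the current phase multiplied by $\omega(j)/j^\alpha$, plus a quadratically small correction of size $O(\omega(j)^2/j^{2\alpha}) = O(1/j^{2\alpha})$. Since $\alpha > 1/2$ we have $2\alpha > 1$, so the deterministic quadratic corrections sum to a convergent (hence $O(1)$, in particular $o(n)$) series; the same holds after differentiating in $\kappa$, where an extra factor polynomial in $n$ appears but is still killed against $j^{-2\alpha}$ once one restricts to the $O(1/n)$-window and uses $\alpha > 1/2$ together with a dyadic decomposition of $[j_0,n]$. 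The genuinely stochastic part is the martingale-type sum $\sum_{j} \omega(j) j^{-\alpha} \Phi_j(\theta)$ with $\Phi_j$ bounded; under Assumption \textbf{A} the summands are independent, mean zero, with variances $\lesssim j^{-2\alpha}$, so by the Burkholder–Davis–Gundy inequality (available here precisely because of the i.i.d. structure, as noted in the discussion after Theorem \ref{clock2}) together with a Borel–Cantelli / chaining argument over $n$ and over the $O(1/n)$-net of $\kappa$-values, this sum is $O(\sqrt{\log n})$ or at worst $O(n^{\varepsilon})$ uniformly, hence $o(n)$ almost surely. Differentiating in $\kappa$ again costs a polynomial factor that is absorbed the same way.

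The main obstacle I anticipate is the uniformity in $\kappa$ of the derivative bound: $\partial_\kappa \theta_n$ must be shown close to $n$ simultaneously for all $\kappa$ in the window, not just pointwise, since the eigenvalues are random points in that window. I would handle this by (i) establishing the estimate on a deterministic $(1/n^2)$-net of $\kappa$-values via BDG plus a union bound, (ii) controlling the modulus of continuity of $\partial_\kappa R_n(\cdot)$ by a crude deterministic bound (differentiating the Prüfer ODE twice and summing $j^{-\alpha}$ against polynomial-in-$n$ factors over $[j_0,n]$, which is $o(n)$ after the dyadic decomposition), and (iii) combining the two to upgrade the net estimate to a uniform one. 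A secondary technical point is the passage from "for each fixed $j$, a.s." to "a.s., for all $j$ simultaneously", but since $j$ ranges over the countable set $\mathbf{Z}$ this is immediate by a further countable intersection of full-measure events once the window estimate is in force.
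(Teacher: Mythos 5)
Your plan is correct in spirit but takes a genuinely different, and more costly, route than the paper. The paper's proof never touches $\partial_\kappa\theta_n$: it starts from the exact identity $\theta_n(\kappa'_{j+1}(n))-\theta_n(\kappa'_j(n))=\pi$, substitutes $\theta_n(\kappa)=\kappa n+\tilde\theta_n(\kappa)$ to get
$(\kappa'_{j+1}(n)-\kappa'_j(n))\,n+\bigl(\tilde\theta_n(\kappa'_{j+1}(n))-\tilde\theta_n(\kappa'_j(n))\bigr)=\pi$,
and then kills the second term by combining two already-established facts: (i) $\kappa'_k(n)\to\kappa_0$ (Lemma \ref{estimate}, which is a one-line consequence of (\ref{integraleq})), and (ii) $\tilde\theta_n(\kappa)\to\tilde\theta_\infty(\kappa)$ a.s.\ locally uniformly in $\kappa$, which follows from the pointwise convergence in Proposition \ref{thetatilde} together with the tightness/H\"older modulus supplied by Proposition \ref{Jp}. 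Nothing more is needed: the difference $\tilde\theta_n(\kappa'_{j+1}(n))-\tilde\theta_n(\kappa'_j(n))$ converges to $\tilde\theta_\infty(\kappa_0)-\tilde\theta_\infty(\kappa_0)=0$.

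Your mean-value-theorem route instead asks for $\partial_\kappa\tilde\theta_n=o(n)$ uniformly on the window, a strictly harder and not-otherwise-needed estimate, and your sketch of it has a real imprecision. When you differentiate the martingale increment over cell $[j,j+1)$ in $\kappa$, the extra factor you pick up is $\partial_\kappa\theta_j\sim j$, not ``a polynomial in $n$'': the differentiated sum looks like $\sum_j\omega(j)\,j^{1-\alpha}\,\Phi_j$, whose variance is $\sum_j j^{2-2\alpha}\sim n^{3-2\alpha}$. A.s.\ this gives growth of order $n^{3/2-\alpha+\varepsilon}$, which is $o(n)$ exactly because $\alpha>\frac12$ --- but this is a borderline computation your sketch does not surface, and your phrasing (``killed against $j^{-2\alpha}$'') suggests you are still picturing a convergent series, which it is not for $\alpha<\frac32$. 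The subsequent deterministic remainder terms and the net-plus-modulus-of-continuity chaining for the derivative are additional machinery the paper simply avoids. If you want to keep the derivative approach, you should (a) correct the $j$ versus $n$ bookkeeping, (b) carry out the BDG bound showing $\sup_{|\kappa-\kappa_0|\le C/n}|\partial_\kappa\tilde\theta_n(\kappa)|=O(n^{3/2-\alpha+\varepsilon})$, and (c) note explicitly that this is $o(n)$ only because $\alpha>\frac12$. But the cleaner move, given what the paper already proves, is to drop the derivative entirely and argue as above from uniform convergence of $\tilde\theta_n$ itself.
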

\begin{remark}
For the spectral property,
the argument
in this paper proves the following :
(1)
In case {\bf A},
$H$
has purely a.c. spectrum on
$[0, \infty)$
(as is shown in \cite{KLS})
(2)
In case {\bf B},
$\mu_{ac}(I) > 0$
for any interval
$I (\subset [0, \infty))$.
If
BDG inequality were true for case {\bf B}, we would have the same statement as in (1).
\end{remark}
\begin{remark}
We can also consider the
``decaying coupling constant model"
defined as follows.
\beq
H'_n :=
- \frac {d^2}{d t^2}
+
n^{- \alpha} V(t)
\quad
\mbox{on}
\;\;
L^2[0, n]
\eeq
with Dirichlet boundary condition.
Theorems \ref{clock1}, \ref{clock2}, and
\ref{strongclock}
are also valid for
$H'_n$,
except the fact that
$\phi = \beta$
is deterministic.
The proof
is simpler : for
$H'_n$
one can show
$n^{-2 \alpha}{\bf E}[ | J^{(n)} |^2 ]
\stackrel{n \to \infty}{\to} 0$
by using the method of proof of
Proposition \ref{square}.
\end{remark}
For the proofs of these theorems,
we basically follow the strategy of \cite{KS,KN1,KN2} :
to study the behavior of the relative Pr\"ufer phase
$\Theta$.
The clock convergence
essentially follows from the H\"older continuity of
$\Theta$
with respect to
$\kappa$,
after taking expectation.
Assuming {\bf A},
this is done by decomposing
$\Theta$
into the martingale part and the remainder.
Assuming {\bf B},
we use the ``conditioning argument" used in \cite{CS} to prove an extension of the martingale inequality and that of the maximal inequality, which is one of the main ingredient of this paper.

The rest
of this paper is organized as follows.
In Section 2,
we prepare some basic tools such as Pr\"ufer variables and obtain
a representation of the Laplace transform of the main point process
in terms of the relative Pr\"ufer phase
$\Theta$,
following the argument from \cite{KS}.
In Section 3,
we prove a version of martingale inequality assuming {\bf B}.
In Section 4,
we prove a version of the maximal inequality using the results in Section 3.
In Section 5,
we assume {\bf A} and prove the $p$-th power version of the results in Section 4,
by using the BDG inequality.
In Section 6,
we prove Theorems
\ref{clock1}, \ref{clock2}, \ref{strongclock}.
In Section 7,
a more detailed discussion
(continuation to Remark \ref{ExampleB})
is given on dynamical systems satisfying  {\bf B}.
Throughout
this paper,
$C$ stands for a positive constant which may change from line to line in each argument.
%

\section{Preliminaries}
Let
$H \psi = \kappa^2 \psi$,
$\psi(0) = 0$,
be a Schr\"odinger equation on 
$[0,+\infty)$ 
with the Dirichlet condition at $0$, which we rewrite
as a Cauchy problem for a vector-valued function,
\[
\left(
\begin{array}{c}
\psi \\ \psi'/\kappa
\end{array}
\right)
=
r_t (\kappa)
\left(
\begin{array}{c}
\sin \theta_t(\kappa) \\ \cos \theta_t(\kappa)
\end{array}
\right),
\quad
\theta_t(\kappa)
=:
\kappa t + \tilde{\theta}_t(\kappa).
\]
Then it follows by straightforward calculations that
\begin{eqnarray}
r_t(\kappa)
&=&
\exp \left(
\frac {1}{2\kappa} Im
\int_0^t
V(s) e^{2i \theta_s(\kappa)} ds
\right)
\label{r-eq}
\\
\tilde{\theta}_t (\kappa)
&=&
\frac {1}{2 \kappa}
\int_0^t
Re (e^{2i \theta_s(\kappa)} -1 )
V(s)
\label{integraleq}
\\
\frac {\partial \theta_t(\kappa)}{\partial \kappa}
&=&
\int_0^t
\frac {r_s^2}{r_t^2} ds
+
\frac {1}{2 \kappa^2}
\int_0^t
\frac {r_s^2}{r_t^2}
V(s)
(1 - Re \; e^{2i \theta_s(\kappa)}) ds.
\label{theta-kappa}
\end{eqnarray}
By 
Sturm's oscillation theorem, 
$j$-th 
eigenvalue 
$E_j(n)$ 
of 
$H_n$ 
satisfies 
$\theta_n \left(
\sqrt{E_j(n)}
\right) = j \pi$ 
by which we can derive the following representation of the Laplace transform of
$\xi_L$.
\begin{lemma}
\label{Laplace}
Let
\beq
\Theta^{(n)}(c)
&:=&
\theta_n(\kappa_c) - \theta_n(\kappa_0),
\\
\kappa_c &:=& \kappa_0 + \frac cn,
\quad
c \in {\bf R}
\eeq
then for
$g \in C_c({\bf R})$,
\beq
{\bf E}[ e^{- \xi_n(g)} ]
&=&
{\bf E}\left[
\exp \left(
- \sum_k
g \left(
(\Theta^{(n)})^{-1}
(
k \pi - \{\theta_n(\kappa_0) \}_{\pi {\bf Z}}
)
\right)
\right)
\right]
\eeq
where
\beq
\theta_n (\kappa)
&=&
[ \theta_n (\kappa) ]_{\pi}
\pi
+
\{ \theta_n (\kappa) \}_{\pi},
\quad
[x]_{\pi}
:=
\left\lfloor
\frac {x}{\pi}
\right\rfloor,
\quad
\{ x \}_{\pi}
:=
x - [x]_{\pi} \pi.
\eeq
\end{lemma}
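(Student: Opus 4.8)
The plan is to follow the computation of \cite{KS}, transcribed to the present continuum setting, combining Sturm's oscillation theorem with the rescaling $c = n(\kappa - \kappa_0)$ that resolves the spectrum near $\kappa_0$ at the natural scale $1/n$. First I would record that, as already noted before the statement, the $j$-th eigenvalue satisfies $\theta_n(\kappa_j(n)) = j\pi$ with $\kappa_j(n) = \sqrt{E_j(n)}$, and that $\kappa \mapsto \theta_n(\kappa)$ is strictly increasing on $(0,\infty)$ --- this is a classical consequence of Sturm--Liouville theory, and is also visible in the leading (strictly positive) term $\int_0^n (r_s^2/r_n^2)\,ds$ of (\ref{theta-kappa}). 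In particular $\Theta^{(n)}$ is strictly increasing, hence invertible on its range, so $(\Theta^{(n)})^{-1}$ in the statement makes sense.

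Next I would perform the substitution. The atom of $\xi_n$ sitting at $c_j := n(\kappa_j(n) - \kappa_0)$ satisfies $\kappa_{c_j} = \kappa_j(n)$, so applying $\theta_n$ and subtracting $\theta_n(\kappa_0)$ gives $\Theta^{(n)}(c_j) = j\pi - \theta_n(\kappa_0)$. Decomposing $\theta_n(\kappa_0) = [\theta_n(\kappa_0)]_\pi\,\pi + \{\theta_n(\kappa_0)\}_\pi$ and setting $k := j - [\theta_n(\kappa_0)]_\pi$ converts this into $\Theta^{(n)}(c_j) = k\pi - \{\theta_n(\kappa_0)\}_\pi$, whence $c_j = (\Theta^{(n)})^{-1}\big(k\pi - \{\theta_n(\kappa_0)\}_\pi\big)$; note that $j \mapsto k$ is a bijection from the index set of positive eigenvalues onto a shifted subset of ${\bf Z}$. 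Substituting this expression for the atoms into $\xi_n(g) = \sum_j g(c_j)$ and taking expectations then produces the asserted identity. Since $g$ has compact support, only finitely many $j$ (equivalently $k$) contribute, so the sum over $k$ is effectively finite; this also removes any worry about the constraint $E_j(n) > 0$, because for large $n$ the relevant eigenvalues all lie in a fixed neighbourhood of $E_0 > 0$, well inside the domain on which $\Theta^{(n)}$ is defined and invertible.

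I do not expect a genuine obstacle. The only points needing care are the strict monotonicity of $\theta_n(\cdot)$ in $\kappa$ (so that $(\Theta^{(n)})^{-1}$ is well defined) and the bookkeeping with the floor/fractional-part decomposition, so that the reindexing $j \mapsto k$ is indeed a bijection onto the effective set of indices; everything else is the argument of \cite{KS}.
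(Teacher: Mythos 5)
Your proposal is correct and is precisely the computation the paper has in mind: the paper does not write out a proof of Lemma~\ref{Laplace} but derives it, as you do, directly from the Sturm-oscillation identity $\theta_n(\sqrt{E_j(n)}) = j\pi$, the strict monotonicity of $\kappa\mapsto\theta_n(\kappa)$ (visible from the leading term of (\ref{theta-kappa})), and the reindexing $k = j - [\theta_n(\kappa_0)]_\pi$. Your bookkeeping with the floor/fractional-part decomposition and your remark that the compact support of $g$ makes the sum over $k$ effectively finite (and keeps $\kappa_c$ safely away from $0$) fill in exactly the details the paper leaves implicit.
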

By definition,
\beq
\Theta^{(n)}(c)
&=&
c +
\tilde{\theta}_n(\kappa_c)
-
\tilde{\theta}_n(\kappa_0)
\\
&=&
c
+
\frac {1}{2 \kappa_0}Re
\int_0^n
(e^{2i \theta_s(\kappa_c)}
-
e^{2i \theta_s(\kappa_0)})
V(s) ds
+
O(n^{-1}).
\eeq
In view of
Lemma \ref{Laplace},
the main task is to show that the 2nd term of RHS tends to
$0$.
To that end, we introduce the functional of the potential
\beq
J^{(t)} (\kappa)
:=
\int_0^t
e^{2i \theta_s(\kappa)} V(s) ds
\eeq
and prove the H\"older continuity of
$J^{(t)}(\kappa)$
with respect to
$\kappa$.
In order to do so,
we need the martingale and the maximal inequalities which we establish in the following sections.
%

\section{Martingale inequality}
The strategy
of the proof of martingale inequality in case B
is based on a variant of the conditioning employed in
\cite{CS}
and the usual argument to prove the original martingale inequality.
\subsection{Notation and Statement}
In this section,
we work under a more general assumption B and set
\beq
J^{(m, N)} &:=&
\sum_{j=m}^{N-1} a_j \omega_j,
\quad
0 \le m \le N
\eeq
for some fixed 
$N$, 
where
$\{ \omega_k \}$
is the stochastic process satisfying the condition {\bf B} and
$a_j$
satisfies a measurability condition :
\begin{eqnarray}
&&
\left|
{\bf E}[ \omega_j | {\cal F}_k ]
\right|
\le
e^{- \rho |j-k|}
\label{decay}
\\
&&
a_j
:=
\frac {b_j}{j^{\alpha}},
\quad
b_j \in
{\cal F}_{j-c \log j},
\quad
c > 0.
\label{measurable}
%
\end{eqnarray}
Here
we slightly abuse the notation and write
$c \log j$
instead of
$\left\lfloor c \log j \right\rfloor$.
The goal
of this section is to prove the following propositions.
\begin{proposition}
\label{square}
Suppose
\beq
| b_j (\omega) | \le c_j (\omega) \, j^{\eta},
\quad
\eta \ge 0,
\quad
\omega \in \Omega
\eeq
with
${\bf E}[ | c_j (\omega) |^2 ]
\le
C_{b, \eta}^2$.
Then
\beq
{\bf E}[ | J^{(m, N)} |^2]
& \le &
2C_{b, \eta}^2
\left(
\sum_{j=m}^{N-1}
\frac {c \log j}{
j^{\alpha - \eta} (j - c \log j)^{\alpha - \eta}
}
+
\sum_{j=m}^{N-1}
\frac {1}
{j^{\alpha - 2\eta - 1 + c \rho}}
\right).
\eeq
\end{proposition}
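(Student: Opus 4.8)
The plan is to estimate $\esm{|J^{(m,N)}|^2}$ by expanding the square, $\esm{|J^{(m,N)}|^2} = \sum_{j,k=m}^{N-1} \esm{a_j a_k \omega_j \omega_k}$, and then to control the off-diagonal terms using the decay of correlations \eqref{decay} together with the measurability assumption \eqref{measurable}. The crucial point is that $a_j = b_j/j^\alpha$ is $\mathcal{F}_{j - c\log j}$-measurable, so for $k \ge j$ we can pull $a_j a_k \omega_j$ (wait---$a_k$ is not $\mathcal{F}_{j-c\log j}$-measurable) --- more carefully, for $j < k$ we condition on $\mathcal{F}_{k - c\log k}$: since $a_k \in \mathcal{F}_{k-c\log k}$ and $a_j \omega_j$ is a product of something in $\mathcal{F}_{j - c\log j} \subset \mathcal{F}_{k - c\log k}$ with $\omega_j$, and $j \le k$, I cannot immediately take $\omega_j$ out; instead condition on $\mathcal F_{k-c\log k}$ to handle $\omega_k$, writing $\esm{a_j\omega_j a_k \omega_k} = \esm{a_j \omega_j a_k \esm{\omega_k \mid \mathcal F_{k-c\log k}}}$, and bound $|\esm{\omega_k\mid\mathcal F_{k-c\log k}}| \le e^{-\rho(k - (k-c\log k))} = e^{-c\rho\log k} = k^{-c\rho}$.

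Thus each off-diagonal pair with $j<k$ contributes at most $\esm{|a_j\omega_j a_k|} \, k^{-c\rho} \le \esm{|b_j c_k|}\, j^{-(\alpha-\eta)} k^{-(\alpha-\eta)} k^{-c\rho}$, and by Cauchy--Schwarz $\esm{|b_j||c_k|} \le \esm{c_j^2 j^{2\eta}}^{1/2}\esm{c_k^2}^{1/2} \le C_{b,\eta}^2 j^\eta$ --- hmm, this double-counts the $j^\eta$; let me instead bound $|b_j|\le c_j j^\eta$ and use $\esm{|c_j j^\eta c_k|}\le C_{b,\eta}^2 j^\eta$. Summing the tail geometric series in $k$ one gets a bound of the shape $\sum_j j^{-(\alpha - 2\eta)} \cdot (\text{something})$; the $k^{-c\rho}$ and the remaining $k^{-(\alpha-\eta)}$, when summed over $k > j$, should produce the $\sum_j j^{-(\alpha - 2\eta - 1 + c\rho)}$ term after comparing the sum to an integral. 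For the near-diagonal band $|j-k| \le c\log j$ (where the conditioning buys nothing because $\mathcal F_{k-c\log k}$ may not separate $\omega_j$ and $\omega_k$), I bound $|\esm{a_ja_k\omega_j\omega_k}|$ crudely by $\esm{|a_j a_k|}$ using $|\omega|\le 1$, giving $C_{b,\eta}^2 (j k)^{-(\alpha-\eta)} \le C_{b,\eta}^2 j^{\eta}\cdot j^{-(\alpha - \eta)}(j-c\log j)^{-(\alpha-\eta)}$ roughly; there are $O(c\log j)$ such $k$ for each $j$, which produces the first term $\sum_j \frac{c\log j}{j^{\alpha-\eta}(j - c\log j)^{\alpha-\eta}}$. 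Finally the diagonal terms $j=k$ give $\sum_j \esm{a_j^2} \le C_{b,\eta}^2 \sum_j j^{2\eta}/j^{2\alpha}$, which is absorbed into the first sum. The factor $2$ accounts for symmetrizing the $j<k$ and $j>k$ contributions.

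The main obstacle I anticipate is the bookkeeping around the near-diagonal band: one must choose the splitting point between "use conditioning" and "use the trivial bound" consistently with the $c\log j$ appearing in \eqref{measurable}, and then verify that both resulting sums are exactly of the stated form, paying attention to the shift from $j$ to $j - c\log j$ in the denominators (which is what forces the slightly awkward $(j - c\log j)^{\alpha-\eta}$ rather than $j^{\alpha-\eta}$ --- it comes from $b_j$ being measurable only up to time $j - c\log j$, so when we condition we lose that much). A secondary subtlety is making sure the Cauchy--Schwarz step is applied so that only one factor of $j^{2\eta}$ (not $j^\eta k^\eta$) appears, which is why the exponent in the second sum is $\alpha - 2\eta - 1 + c\rho$ and not something symmetric; this requires bounding $|b_k|$ by $c_k k^\eta$ on the diagonal-type estimate but keeping $|b_j b_k| \le c_j c_k j^\eta k^\eta$ and then using $k^\eta \le$ (absorbed) only when $k$ is comparable to $j$ --- actually for the far off-diagonal term we keep both $j^\eta$ and $k^\eta$ and the extra $k^{\eta}$ is what, combined with $k^{-c\rho - (\alpha - \eta)}$ summed over $k$, still converges and yields the claimed exponent. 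Everything else is routine integral comparison.
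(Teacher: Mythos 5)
Your proposal follows the paper's proof essentially verbatim: expand the square, split the double sum into a near-diagonal band (width $O(\log)$, trivially bounded via $|\omega|\le 1$ and Cauchy--Schwarz on $c_jc_k$) and a far-off-diagonal tail (conditioned on $\mathcal F$ at time $(\text{larger index}) - c\log(\text{larger index})$ to gain the $j^{-c\rho}$ factor from \eqref{decay}). The one bookkeeping slip you already flagged is real but harmless: the split should be at $c\log$ of the \emph{larger} index so that $\omega_{\text{smaller}}\in\mathcal F_{\text{larger}-c\log(\text{larger})}$; the paper handles this cleanly by re-indexing $i=j-k$ so that $j$ is always the larger index and the split is at $k=c\log j$.
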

\begin{proposition}
\label{martingale}
Suppose
\beq
| b_j (\omega) |
\le
c_j (\omega)
\, j^{\eta},
\quad
\eta \ge 0,
\quad
\omega \in \Omega.
\eeq
Then
\beq
{\bf E}\left[
\sup_{m \le n \le N}
\left|
J^{(m, n)}
\right|^p
\right]^{1/p}
& \le &
C_1
\left(
{\bf E}[ | J^{(m, N)} |^p ]^{1/p}
+
{\bf E}[ | D |^p ]^{1/p}
\right),
\quad
p > 1.
\\
\mbox{where }\;
D
&:=&
\left(
\sum_{j=m}^{N-1}
\frac {1}{j^{c \rho+\alpha-\eta}}
+
\frac {d \, m^{\beta}}{m^{\alpha - \eta}}
\right)
\sup_{m \le j \le N} c_j.
\eeq
\end{proposition}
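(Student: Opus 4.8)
The plan is to adapt the classical proof of Doob's $L^{p}$ maximal inequality, viewing $n\mapsto J^{(m,n)}$ as a \emph{near-martingale} for $\{{\cal F}_{n}\}$. Under {\bf B} the $\omega_{j}$ are not martingale differences, but the advance-measurability $b_{j}\in{\cal F}_{j-c\log j}$ in (\ref{measurable}) together with the decay (\ref{decay}) will show that the defect
\[
\delta_{n}:={\bf E}\big[J^{(m,N)}-J^{(m,n)}\,\big|\,{\cal F}_{n}\big]=\sum_{j=n}^{N-1}{\bf E}\big[a_{j}\omega_{j}\mid{\cal F}_{n}\big]
\]
is uniformly small: there is a random variable $R$, not depending on $n$, with $|\delta_{n}|\le R$ for all $m\le n\le N$ and $\|R\|_{p}\le C\|D\|_{p}$. (If $\sup_{m\le j\le N}c_{j}\notin L^{p}$ the asserted inequality is trivial, so we may assume it lies in $L^{p}$; then $J^{\ast}:=\sup_{m\le n\le N}|J^{(m,n)}|\le\sum_{j=m}^{N-1}|a_{j}|\in L^{p}$.) Granting the defect bound, the rest is routine: each $J^{(m,n)}$ is ${\cal F}_{n-1}$-measurable, so for $\lambda>0$ the time $\tau:=\inf\{n\ge m:|J^{(m,n)}|\ge\lambda\}\wedge N$ satisfies $\{\tau=n\}\in{\cal F}_{n}$, and $|J^{(m,n)}|\le|{\bf E}[J^{(m,N)}\mid{\cal F}_{n}]|+|\delta_{n}|\le{\bf E}[|J^{(m,N)}|\mid{\cal F}_{n}]+R$ gives, upon summation over $\{\tau=n\}$, the weak-type bound $\lambda\,{\bf P}(J^{\ast}\ge\lambda)\le{\bf E}\big[(|J^{(m,N)}|+R)\,1_{\{J^{\ast}\ge\lambda\}}\big]$. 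Multiplying by $p\lambda^{p-2}$, integrating in $\lambda$, applying H\"older and dividing by $\|J^{\ast}\|_{p}^{p-1}$ yields $\|J^{\ast}\|_{p}\le\frac{p}{p-1}\big(\|J^{(m,N)}\|_{p}+\|R\|_{p}\big)$, which is the claim with $C_{1}=\frac{p}{p-1}\max(1,C)$.

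Everything thus rests on the defect estimate, and this is where the conditioning device of \cite{CS} enters. Split $\delta_{n}=\sum_{j=n}^{N-1}{\bf E}[a_{j}\omega_{j}\mid{\cal F}_{n}]$ into three parts. First, the single term $j=n$: as $\omega_{n}$ is ${\cal F}_{n}$-measurable, ${\bf E}[a_{n}\omega_{n}\mid{\cal F}_{n}]=a_{n}\omega_{n}$, of modulus $\le|a_{n}|\le c_{n}\,n^{\eta-\alpha}$. Second, the indices $n<j$ with $j-\lfloor c\log j\rfloor\le n$, so that $a_{j}\in{\cal F}_{n}$ by (\ref{measurable}): then ${\bf E}[a_{j}\omega_{j}\mid{\cal F}_{n}]=a_{j}\,{\bf E}[\omega_{j}\mid{\cal F}_{n}]$, which by (\ref{decay}) has modulus $\le|a_{j}|\,e^{-\rho(j-n)}\le c_{j}\,j^{\eta-\alpha}e^{-\rho(j-n)}$. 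Third, the indices $j$ with $j-\lfloor c\log j\rfloor>n$, where $a_{j}$ is not yet ${\cal F}_{n}$-measurable: here one conditions first on ${\cal F}_{j-\lfloor c\log j\rfloor}$ (which contains ${\cal F}_{n}$), pulls out $a_{j}$ by (\ref{measurable}), and applies (\ref{decay}) to $\omega_{j}$, getting $|{\bf E}[a_{j}\omega_{j}\mid{\cal F}_{n}]|\le e^{-\rho\lfloor c\log j\rfloor}\,{\bf E}[|a_{j}|\mid{\cal F}_{n}]\le C\,j^{\eta-\alpha-c\rho}\,{\bf E}[c_{j}\mid{\cal F}_{n}]$. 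Summing over $j$: since $\alpha>\eta$ and $n\ge m$, the first two parts contribute $\le C\big(\sup_{m\le j\le N}c_{j}\big)\,m^{\eta-\alpha}$ (the geometric series absorbs the $\lfloor c\log j\rfloor$ many terms of the second part), which is $\le C\,D$ after absorbing the constant into the factor $d\,m^{\beta}$ of $D$; the third part contributes $\le C\big(\sum_{j=m}^{N-1}j^{-(c\rho+\alpha-\eta)}\big)\,\sup_{m\le n\le N}{\bf E}\big[\sup_{m\le j\le N}c_{j}\mid{\cal F}_{n}\big]$. Taking $R$ to be the sum of these two bounds gives $|\delta_{n}|\le R$ uniformly; and since $n\mapsto{\bf E}[\sup_{j}c_{j}\mid{\cal F}_{n}]$ is a closed martingale, Doob's $L^{p}$ inequality bounds its $L^{p}$-norm by $\frac{p}{p-1}\|\sup_{j}c_{j}\|_{p}$, so that $\|R\|_{p}\le C\|D\|_{p}$, matching the two terms of $D$.

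The hard part is the defect estimate, specifically producing a \emph{summable} series. A naive Doob decomposition --- subtracting ${\bf E}[a_{j}\omega_{j}\mid{\cal F}_{j-1}]$ --- would leave a predictable remainder of size only $\sum_{j}|a_{j}|\,e^{-\rho}$, which is far too large; the gain comes solely from using the $c\log j$-step head start of the coefficient $a_{j}$ \emph{simultaneously} with the decay (\ref{decay}), which converts $e^{-\rho|j-k|}$ at $|j-k|=c\log j$ into the power $j^{-c\rho}$. The delicate case is the third one above, where $a_{j}$ has not been revealed by time $n$ and one must still recover the factor $j^{-c\rho}$ through the tower property; this is why the defect is only controlled by a conditional expectation of $\sup_{j}c_{j}$, which forces the auxiliary Doob inequality just used. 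The remaining points are routine: the finitely many small $j$ for which $\lfloor c\log j\rfloor\in\{0,1\}$ (handled by enlarging $m$ or absorbing into constants), and the abuse $\lfloor c\log j\rfloor\leftrightarrow c\log j$.
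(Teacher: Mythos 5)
Your proof is correct and follows essentially the same strategy as the paper: view $n\mapsto J^{(m,n)}$ as a near-martingale, control the defect ${\bf E}[J^{(n,N)}\mid{\cal F}_n]$ using the two-scale conditioning of \cite{CS} (the $c\log j$ head start in the measurability of $b_j$ combined with the decay $e^{-\rho\,c\log j}=j^{-c\rho}$), and then run Doob's weak-type-to-$L^p$ argument with a stopping time. The splitting of the defect into three parts matches the paper's splitting of $J^{(n,N)}$ into $J_{A_n}$ (your third part) and $J_{B_n}$ (your first and second parts), with one minor improvement: you exploit the decay of ${\bf E}[\omega_j\mid{\cal F}_n]$ on $B_n$ rather than the crude count $\sharp B_n\le d n^{\beta}$, so your bound on that piece drops the $n^\beta$ factor.

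The one genuine divergence is in how the $A_n$ (third) contribution is carried through. You insist on a \emph{pointwise} uniform-in-$n$ bound $|\delta_n|\le R$, which forces the term $\sup_{m\le n\le N}{\bf E}\bigl[\sup_j c_j\mid{\cal F}_n\bigr]$ into $R$ and requires an auxiliary application of Doob's $L^p$ inequality to the closed martingale $n\mapsto{\bf E}[\sup_j c_j\mid{\cal F}_n]$ in order to conclude $\|R\|_p\le C\|D\|_p$. The paper avoids this extra step entirely: it keeps the defect bound in the form $\sum_{j\in A_n}j^{-(c\rho+\alpha)}{\bf E}[|b_j|\mid{\cal F}_n]$, inserts it into the weak-type inequality as a double sum over $n$ and $j$, and then \emph{swaps the order of summation}, which collapses the $n$-sum (the events $\{T=n\}$ being disjoint) and directly yields ${\bf E}[D;A]$ with the deterministic coefficient $\sum_j j^{-(c\rho+\alpha-\eta)}$ already factored out. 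This swap is tighter and keeps the whole proof within a single Doob argument; your version is no less correct, but carries a redundant application of Doob. A small point of hygiene: your stopping time $\tau=\inf\{n:\,|J^{(m,n)}|\ge\lambda\}\wedge N$ conflates ``first hit at time $N$'' with ``never hit''; the paper sets $T=N+1$ on the no-hit event, which is cleaner for the identity $\lambda{\bf P}(J^\ast\ge\lambda)\le\sum_n{\bf E}[|J^{(m,n)}|;T=n]$. This is easily repaired and does not affect the substance.
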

%

\subsection{Proof of Proposition \ref{square}}
We decompose
the sum into two parts.
\beq
{\bf E}[ | J^{(m, N)} |^2]
&=&
2\sum_{m \le i \le j \le N}
{\bf E}[ a_i \omega_i a_j \omega_j ]
\\
&=&
2\sum_{j=m}^{N-1}
\sum_{k=0}^j
{\bf E} \left[
a_{j-k} \omega_{j-k} a_j \omega_j
\right]
\quad
(i = j-k)
\\
&=&
2\sum_{j=m}^{N-1}
\left(
\sum_{k=0}^{c \log j}
+
\sum_{k=c \log j}^j
\right)
{\bf E} \left[
a_{j-k} \omega_{j-k} a_j \omega_j
\right]
\\
&=:& I + II
\eeq
where
$I$
is the sum with
$|i -j| \le c \log j$,
and
$II$
is the remainder.

$I$
can be estimated easily:
\begin{eqnarray}
I
& \le &
2C_{b, \eta}^2
\sum_{j=m}^{N-1}
\sum_{k=0}^{c \log j}
\frac {1}{
j^{\alpha - \eta} (j - c \log j)^{\alpha - \eta}
}
\nonumber
\\
& \le &
2C_{b, \eta}^2
\sum_{j=m}^{N-1}
\frac {c \log j}{
j^{\alpha - \eta} (j - c \log j)^{\alpha - \eta}
}.
\label{squareone}
\end{eqnarray}
To estimate
$II$,
we use the condition
(\ref{decay}), (\ref{measurable})
: for
$k \ge c \log j$
we have
$\omega_{j-k} \in {\cal F}_{j-c \log j}$
so that
\beq
\left|
{\bf E}[ a_{j-k} a_j \omega_{j-k} \omega_j ]
\right|
&=&
\left|
{\bf E}[
a_{j-k} a_j \omega_{j-k}
{\bf E}[ \omega_j | {\cal F}_{j - c \log j} ]
]
\right|
\\
& \le &
{\bf E}
\Bigl[
\left|
a_{j-k} a_j \omega_{j-k}
\right|
\cdot
\left|
{\bf E}[ \omega_j | {\cal F}_{j - c \log j} ]
\right|
\Bigr]
\\
& \le &
e^{- \rho c \log j}
{\bf E}[ | a_{j-k} a_j \omega_{j-k} | ]
\eeq
which yields
\begin{eqnarray}
| II |
& \le &
2
\sum_{j=m}^{N-1}
\sum_{k = c \log j}^j
e^{- \rho c \log j}
{\bf E}
\Bigl[
| a_{j-k} a_j \omega_{j-k} |
\Bigr]
\nonumber
\\
& \le &
2C_{b, \eta}^2
\sum_{j=m}^{N-1}
\frac {1}
{j^{\alpha- 2\eta -1 + c \rho}}.
\label{squaretwo}
\end{eqnarray}

Using (\ref{squareone}) and (\ref{squaretwo})
completes the proof of Proposition \ref{square}.
\QED
%

\subsection{Proof of Proposition \ref{martingale}}
We decompose the sum such as
\beq
J^{(m, N)}
&:=&
\sum_{j=m}^{N-1} a_j \omega_j
=
J^{(m, n)} + J^{(n, N)},
\quad
m \le n \le N
\eeq
and the 2nd term in RHS is further decomposed into
\beq
J^{(n, N)}
&=&
\sum_{j=n}^{N-1} a_j \omega_j
=: J_{A_n} + J_{B_n}
\\
\mbox{ where }\;
A_n
&=&
\{ j \ge n \, | \, j - c \log j \ge n \},
\quad
B_n
=
\{ j \ge n \, | \, j - c \log j \le n \}.
\eeq
It is
easy to see that, for any
$\beta > 0$,
we can find
$d = d_{\beta} > 0$
such that
\beq
\sharp B_n \le d n^{\beta},
\quad
\forall \beta > 0.
\eeq
\noindent
(1)
For
$J_{A_n}$
we use (\ref{decay}), (\ref{measurable}) :
\beq
\left|
{\bf E}[ J_{A_n} | {\cal F}_n ]
\right|
& \le &
\sum_{j \in A_n}
{\bf E}\left[
|a_j| \cdot
\left|
{\bf E}[ \omega_j | {\cal F}_{j-c\log j} ]
\right|
\middle| {\cal F}_n
\right]
=
\sum_{j \in A_n}
\frac {1}{j^{c \rho +\alpha}}
{\bf E}[ | b_j | | {\cal F}_n ].
\eeq
(2)
For
$J_{B_n}$
we simply use the boundedness of
$\omega_j$ :
\beq
\left|
{\bf E}[ J_{B_n} | {\cal F}_n  ]
\right|
& \le &
\sum_{j \in B_n}
\frac {1}{j^{\alpha}}
{\bf E}[ | b_j  | | {\cal F}_n ].
\eeq
Therefore
\beq
\left|
{\bf E}[ J^{(n, N)} | {\cal F}_n ]
\right|
& \le &
\sum_{j \in A_n}
\frac {1}{j^{c \rho+\alpha}}
{\bf E}[ | b_j | | {\cal F}_n ]
+
\sum_{j \in B_n}
\frac {1}{j^{\alpha}}
{\bf E}[ | b_j  | | {\cal F}_n ].
\eeq
Since
$J^{(m, N)} = J^{(m, n)} + J^{(n, N)}$,
\begin{eqnarray}
&&
{\bf E}[ J^{(m, N)} | {\cal F}_n ]
=
J^{(m, n)} +
{\bf E}[ J^{(n, N)} | {\cal F}_n ]
\nonumber
\\
& \ge &
J^{(m, n)}
-
\left(
\sum_{j \in A_n}
\frac {1}{j^{c \rho + \alpha}}
{\bf E}[ | b_j | | {\cal F}_n ]
+
\sum_{j \in B_n}
\frac {1}{j^{\alpha}}
{\bf E}[ | b_j  | | {\cal F}_n ]
\right).
\label{lowerbound}
\end{eqnarray}
Let
\beq
T
&:=&
\left\{
\begin{array}{cc}
\inf
\left\{
n
\, \middle| \,
m \le n \le N, \;
J^{(m, n)} \ge \lambda
\right\}
&
\;
( J^{(m, n)} \ge \lambda
\mbox{ for some } n )
\\
N+1
&
\mbox{ (otherwise) }
\end{array}
\right.
\eeq
Then we have
\beq
\lambda
{\bf P}\left(
\sup_{n \le N}
J^{(m, n)} > \lambda
\right)
&=&
\lambda
\sum_{n=m}^N
{\bf P}(T = n)
\le
\sum_{n=m}^N
{\bf E} [ J^{(m, n)} \, ; \, T = n].
\eeq
Substituting
(\ref{lowerbound})
yields
\beq
&&
\lambda
{\bf P}\left(
\sup_{n \le N}
J^{(m,n)} > \lambda
\right)
\\
& \le &
\sum_{n=m}^N
{\bf E}
\left[
{\bf E}[ J^{(m,N)} | {\cal F}_n ]
+
\sum_{j \in A_n}
\frac {1}{j^{c \rho+ \alpha}}
{\bf E}[ | b_j | | {\cal F}_n ]
+
\sum_{j \in B_n}
\frac {1}{j^{\alpha}}
{\bf E}[ | b_j  | | {\cal F}_n ]
 \;;\;
T = n
\right]
\\
&=&
\sum_{n=m}^N
{\bf E}[ J^{(m,N)} \; ; \; T = n]
+
\sum_{n=m}^N
\sum_{j \in A_n}
\frac {1}{j^{c \rho+ \alpha}}
{\bf E}
\left[
| b_j | \, ; \, T = n
\right]
\\
&&  \qquad +
\sum_{n=m}^N
\sum_{j \in B_n}
\frac {1}{j^{\alpha}}
{\bf E}[ | b_j | ; T = n].
\eeq
Let
$
A :=
\{ \sup_{n \le N} J^{(m,n)} > \lambda \}.
$
We estimate
2nd and 3rd terms in RHS.
\\
(1)
2nd term :
\beq
&&
\sum_{n=m}^{N-1}
\sum_{j \in A_n}
\frac {1}{j^{c \rho}}
{\bf E}
\left[
| a_j | \, ; \, T = n
\right]
\\
&=&
\sum_{n=m}^{N-1}
\sum_{j=n}^{N-1}
1\left(
n \le j - c \log j
\right)
\frac {1}{j^{c \rho+ \alpha}}
{\bf E}\left[
| b_j | ; T = n, \; A
\right]
\\
&=&
\sum_{j=m}^{N-1}
\frac {1}{j^{c \rho+ \alpha}}
\sum_{n=m}^{j - c \log j}
{\bf E}[ | b_j | ; T = n, \; A ]
\\
&\le&
\sum_{j=m}^{N-1}
\frac {1}{j^{c \rho+ \alpha}}
{\bf E} [ | b_j | \,;\, A ]
\\
& \le &
\left(
\sum_{j=m}^{N-1}
\frac {1}{j^{c \rho +\alpha-\eta}}
\right)
{\bf E} [ \sup_{m \le j \le N} c_j \,;\, A ].
\eeq
(2)
3rd term :
\beq
\sum_{n=m}^{N-1}
\sum_{j \in B_n}
\frac {1}{j^{\alpha}}
{\bf E}[ | b_j | \,;\, T = n]
& \le &
\sup_{n \ge m}
\left(
\frac {d n^{\beta}}{n^{\alpha-\eta}}
\right)
\sum_{n=m}^{N-1}
{\bf E}[ \sup_{j \in B_n} c_j ; T=n, A]
\\
& \le &
\left(
\frac {d \, m^{\beta}}{m^{\alpha - \eta}}
\right)
{\bf E}[ \sup_{m \le j \le N} c_j ;  A],
\;\;
\beta < \alpha - \eta.
\eeq
Therefore
\beq
&&
\lambda
{\bf P}\left(
\sup_{n \le N}
J^{(m, n)} > \lambda
\right)
\le
{\bf E}\left[
J^{(m, N)}
+
D
 \,;\, A
\right].
\eeq
$D$
is defined in the statement of Proposition \ref{martingale}.
Let
$
\overline{J}^{(m, N)}
:=
\sup_{n \le N} J^{(m, n)}.
$
Then
\beq
{\bf E} \left[
| \overline{J}^{(m, N)} |^p
\right]
&=&
\int_0^{\infty}
p \lambda^{p-1}
{\bf P}(A) d \lambda
\\
& \le &
\int_0^{\infty} d \lambda
p \lambda^{p-2}
\int_{\Omega}
d {\bf P}
1_A (\omega)
\left(
J^{(m, N)} + D
\right)
\\
& \le &
\int_{\Omega}d {\bf P}
\left(
| J^{(m, N)}|+D
\right)
\int_0^{\infty}
d \lambda
p \lambda^{p-2}
1
\left(
\lambda < \overline{J}^{(m, N)}
\right)
\\
&=&
\frac {p}{p-1}
{\bf E}
\left[
\left(
| J^{(m, N)} | + D
\right)
( \overline{J}_N^{(m, N)} )^{p-1}
\right]
\\
& \le &
\frac {p}{p-1}
\left(
{\bf E}[ | J^{(m, N)} |^p ]^{1/p}
+
{\bf E}[ | D |^p ]^{1/p}
\right)
\cdot
{\bf E}[ (\overline{J}^{(m, N)})^p ]^{\frac {p-1}{p}}.
\eeq
Dividing
both sides by
${\bf E}[ (\overline{J}^{(m, N)})^p ]^{\frac {p-1}{p}}$
completes the proof.
\QED
%
\section{H\"older continuity}
In this section
we assume {\bf B}
and prove a version of the maximal inequalities for
$R$
and
$J$.
\subsection{Estimate on $R$}
Let
\beq
R^{(m, t)} (\kappa)
&:=&
\int_m^t
\left(
e^{2i \theta_s(\kappa)} - 1
\right)
V(s) ds \,,
\\
R^{(m,n)}(\kappa)
&=&
\sum_{j=m}^{n-1}
\frac {\omega(j)}{j^{\alpha}}
\int_j^{j+1}
\left(
e^{2i \kappa s + 2i \tilde{\theta}_s(\kappa)}-1
\right)
V(s) ds,
\quad
n \in {\bf N} \,,
\\
R^{(n)}(\kappa)
&:=&
R^{(0, n)} (\kappa).
\eeq
\begin{proposition}
\label{R2}
\beq
&&
{\bf E}\left[
\sup_{m \le t \le N} | R^{(m, t)}(\kappa) |^2
\right]
\\
& \le &
C
\Biggl\{
\sum_{j=m}^{N-1}
\frac {c \log j}{j^{\alpha}(j - c \log j)^{\alpha}}
+
\sum_{j=m}^{N-1}
\frac {1}{j^{\alpha-1+c\rho}}
+
\left(
\sum_{j=m}^{N-1}
\frac {1}{j^{c \rho+\alpha}}
\right)^2
+
\left(
\frac {d \, m^{\beta}}{m^{\alpha }}
\right)^2
\Biggr\}
\\
&& +
C
\left(
\sum_{j=m}^{N-1}
\frac {c \log j}{j^{2 \alpha}}
\right)^2.
\eeq
\end{proposition}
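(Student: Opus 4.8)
The plan is to reduce the bound on $\sup_{m\le t\le N}|R^{(m,t)}(\kappa)|^2$ to the estimates already available from Propositions \ref{square} and \ref{martingale}, plus an elementary bound on the "fractional endpoint" remainder. First I would discretize the sup: for $t\in[n,n+1)$ we have $R^{(m,t)}(\kappa)=R^{(m,n)}(\kappa)+\big(R^{(m,t)}(\kappa)-R^{(m,n)}(\kappa)\big)$, and the second piece is a single integral over $[n,t]\subset[n,n+1)$ of $(e^{2i\theta_s}-1)V(s)$, which is crudely bounded by $\|V\|_\infty$ restricted to that cell, i.e. by $C|\omega(n)|/n^\alpha\le C c_n/n^{\alpha-\eta}$ in the notation of the present section (with $\eta=0$ here, since $R$ carries no $j^\eta$ growth — I would either absorb this or state the proposition with $c_j$'s as in Proposition \ref{martingale}). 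Hence $\sup_{m\le t\le N}|R^{(m,t)}|\le \sup_{m\le n\le N}|R^{(m,n)}|+C\sup_{m\le n\le N} c_n/n^{\alpha}$, and after squaring and taking expectation it suffices to control $\mathbf{E}[\sup_{m\le n\le N}|R^{(m,n)}|^2]$ together with a term of the form $\big(\sum_{j=m}^{N-1} j^{-\alpha}\big)^2$-type remainder (which is dominated by the $(d\,m^\beta/m^\alpha)^2$ term already present).

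Next I would split $R^{(m,n)}(\kappa)$ according to the decomposition $e^{2i\theta_s(\kappa)}-1 = (e^{2i\theta_s(\kappa)}-e^{2i\kappa s})+(e^{2i\kappa s}-1)$, or more precisely write the $j$-th cell contribution as $a_j\omega(j)$ with $a_j=j^{-\alpha}\int_j^{j+1}(e^{2i\kappa s+2i\tilde\theta_s(\kappa)}-1)V(s)\,ds$. The point is that $\tilde\theta_s(\kappa)$ is, up to $O(s^{-1})$ corrections, measurable with respect to $\mathcal F_{j-c\log j}$ on the cell $[j,j+1)$: by \eqref{integraleq}, $\tilde\theta_s(\kappa)$ is determined by the potential on $[0,s]$, and the contribution of the last $c\log j$ cells changes $\tilde\theta$ by at most $O\big(\sum_{k=j-c\log j}^{j} k^{-\alpha-1}\cdot \text{something}\big)$ — this is the standard "freeze the phase" manoeuvre from \cite{KS,KN1}. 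So I would define $b_j$ to be $a_j j^{\alpha}$ with the phase $\tilde\theta_s$ replaced by its $\mathcal F_{j-c\log j}$-measurable proxy, verify $b_j\in\mathcal F_{j-c\log j}$ and $|b_j|\le c_j$ with $\mathbf{E}[c_j^2]\le C$, and collect the error $a_j-b_j/j^\alpha$ into a separate sum. The error sum, being a deterministic-size bound on $\sum_j j^{-\alpha}\cdot(\text{phase error on cell }j)$ where the phase error is $O(c\log j\cdot j^{-\alpha})$ type, contributes precisely the last displayed term $C\big(\sum_{j=m}^{N-1} c\log j\,/\,j^{2\alpha}\big)^2$ after squaring. (Here I am using that the phase increment across one cell is itself $O(j^{-\alpha})$ and that freezing over $c\log j$ cells costs the logarithmic factor.)

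With $R^{(m,n)}$ written as $J^{(m,n)}+(\text{error})$ for the $a_j,b_j$ just constructed, Proposition \ref{martingale} gives $\mathbf{E}[\sup_n|J^{(m,n)}|^2]^{1/2}\le C_1\big(\mathbf{E}[|J^{(m,N)}|^2]^{1/2}+\mathbf{E}[|D|^2]^{1/2}\big)$ with $D=\big(\sum_{j=m}^{N-1} j^{-(c\rho+\alpha)}+d\,m^\beta/m^\alpha\big)\sup_j c_j$, so $\mathbf{E}[|D|^2]\le C\big\{\big(\sum_j j^{-(c\rho+\alpha)}\big)^2+(d\,m^\beta/m^\alpha)^2\big\}$ — matching the 3rd and 4th terms in the braces. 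And Proposition \ref{square} (with $\eta=0$, $C_{b,\eta}=C$) gives $\mathbf{E}[|J^{(m,N)}|^2]\le 2C^2\big(\sum_j c\log j\,/(j^\alpha(j-c\log j)^\alpha)+\sum_j j^{-(\alpha-1+c\rho)}\big)$ — matching the 1st and 2nd terms. Squaring $\sup_n|R^{(m,n)}|\le \sup_n|J^{(m,n)}|+(\text{error sum})$ via $(x+y)^2\le 2x^2+2y^2$ then assembles exactly the stated bound.

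The main obstacle is the phase-freezing step: one must show carefully that replacing $\tilde\theta_s(\kappa)$ on cell $[j,j+1)$ by an $\mathcal F_{j-c\log j}$-measurable quantity costs only $O(c\log j\cdot j^{-\alpha})$ in the integrand, uniformly in $s\in[j,j+1)$, so that after multiplication by $j^{-\alpha}$ and summation the error is genuinely of the order $\sum_j c\log j\,/\,j^{2\alpha}$ claimed (and in particular square-summable when $\alpha>1/2$, so that its square is controlled). This requires the Lipschitz dependence of $e^{2i\tilde\theta_s}$ on the phase together with the a priori bound on how much the potential on the last $c\log j$ cells can move $\tilde\theta_s$, which follows from \eqref{integraleq} and $|V|\lesssim |\omega(k)|k^{-\alpha}$; the bookkeeping here, and making sure the constant $c$ in "$c\log j$" is chosen large enough that $c\rho$ beats the relevant exponents while $b_j$ remains $\mathcal F_{j-c\log j}$-measurable, is the delicate part. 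Everything else is the routine Doob-type argument already carried out in the proofs of Propositions \ref{square} and \ref{martingale}.
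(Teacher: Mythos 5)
Your proposal is correct and follows essentially the same route as the paper: reduce to integer endpoints with a $O(n^{-\alpha})$ remainder, freeze the phase $\tilde\theta_s$ to $\hat\theta_j := \tilde\theta_{j-c\log j}$ so that the cell coefficient $b_j$ becomes $\mathcal F_{j-c\log j}$-measurable, apply Propositions~\ref{square} and~\ref{martingale} with $\eta=0$ to the frozen part (giving the first four terms), and bound the phase-unfreezing error deterministically by $\sum_j c\log j / j^{2\alpha}$ (giving the last squared term). The paper just packages the "$-1$" term as a separate $R_3$ handled like $R_1$, and you correctly identify the phase-freezing estimate $|\tilde\theta_s - \hat\theta_j| \lesssim c\log j \cdot j^{-\alpha}$ via \eqref{integraleq} as the only step needing care.
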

\begin{proof}
Fix $t\le N$ and set 
$n = \lfloor t \rfloor$.
Since
\beq
R^{(m,t)}
&=&
R^{(m, n)} + R^{(n, t)},
\quad
| R^{(n, t)} |
 \le
C n^{-\alpha},
\eeq
it suffices to estimate
${\bf E}[ \sup_{m \le n \le N}
| R^{(m, n)} |^2 ]$.
Here we set
\[
\hat{\theta}_j (\kappa)
:=
\tilde{\theta}_{j - c \log j} (\kappa),
\quad
c > 0.
\]
Then
we decompose
\begin{eqnarray}
&&
R^{(m, n)}(\kappa)
\nonumber
\\
&=&
\sum_{j=m}^{n-1}
\frac {\omega(j)}{j^{\alpha}}
\int_j^{j+1}
\Biggl\{
e^{2 i\kappa s + \hat{\theta}_j(\kappa)}
+
e^{2i \kappa s + 2i \hat{\theta}_j(\kappa)}
\left(
e^{2i
\left(
\tilde{\theta}_s(\kappa) - \hat{\theta}_j(\kappa)
\right)
}
-1
\right)
-1
\Biggr\}
f(s-j) ds
\nonumber
\\
&=:&
R_1^{(m,n)}(\kappa)
+
R_2^{(m,n)}(\kappa)
+
R_3^{(m,n)}(\kappa)
\label{decompositionR}
\end{eqnarray}
which we estimate separately. \\
(1)
$R_1$ :
Let
\beq
b_j :=
\int_j^{j+1}
e^{2i \kappa s} f(s-j) ds
\cdot
e^{2i \hat{\theta}_j(\kappa)}.
\eeq
By
Proposition \ref{square}
($\eta = 0$),
\beq
{\bf E}\left[
| R_1^{(m,n)} (\kappa) |^2
\right]
& \le &
C
\Biggl\{
\sum_{j=m}^{n-1}
\frac {c \log j}{j^{\alpha}(j - c \log j)^{\alpha}}
+
\sum_{j=m}^{n-1}
\frac {1}{j^{\alpha-1+c \rho}}
\Biggr\}.
\eeq
Thus by Proposition \ref{martingale},
\beq
&&
{\bf E}\left[
\sup_{m \le n \le N}| R_1^{(m, n)} (\kappa) |^2
\right]
\le
C_1
{\bf E}\left[
| R_1^{(m,N)} (\kappa) |^2
\right]
+
C_2
D^2
\quad
\\
& \le &
C
\Biggl\{
\sum_{j=m}^{N-1}
\frac {c \log j}{j^{\alpha}(j - c \log j)^{\alpha}}
+
\sum_{j=m}^{N-1}
\frac {1}{j^{\alpha-1+c \rho}}
+
\left(
\sum_{j=m}^{N-1}
\frac {1}{j^{c \rho+\alpha}}
\right)^2
+
\left(
\frac {d \, m^{\beta}}{m^{\alpha}}
\right)^2
\Biggr\}.
\eeq
(2)
$R_2$ : Let
\beq
b_j
&=&
\int_j^{j+1}
e^{2i \kappa s + 2i \hat{\theta}_j(\kappa)}
\left(
e^{2i
\left(
\tilde{\theta}_s(\kappa) - \hat{\theta}_j(\kappa)
\right)
}
-1
\right)
f(s-j) ds.
\eeq
Here we estimate
\beq
\left|
e^{2i
\left(
\tilde{\theta}_s(\kappa) - \hat{\theta}_j(\kappa)
\right)
}
-1
\right|
& \le &
2
\left|
\tilde{\theta}_s(\kappa) - \hat{\theta}_j(\kappa)
\right|
\\
& \le &
\frac {2}{2 \kappa}
\int_{j - c \log j}^{j+1}
\left|
\left(
e^{2i \theta_u(\kappa)}-1
\right)
V(u)
\right|
du
\\
& \le &
\frac {1 + c \log j}{(j - c \log j)^{\alpha}}
\le
\frac {c' \log j}{j^{\alpha}}
\eeq
for large
$j$
and for some
$c'$.
Hence we have
$| R_2^{(m, n)} (\kappa) |
\le
\sum_{j=m}^{n-1}
\frac {c' \log j}{j^{2 \alpha}}$
so that
\beq
{\bf E}
\left[
\sup_{m \le n \le N}
| R_2^{(m, n)} (\kappa) |^2
\right]
& \le &
C
\left(
\sum_{j=m}^{N-1}
\frac {c' \log j}{j^{2 \alpha}}
\right)^2.
\eeq
(3)
$R_3^{(m, n)}$ :
this is similar to that for
$R_1^{(m,n)}$.
\beq
&&
{\bf E}[ \sup_{n \le N} | R_3^{(m, n)} |^2 ]
\\
& \le &
C
\left(
\sum_{j=m}^{N-1}
\frac {c \log j}{
j^{\alpha } (j - c \log j)^{\alpha }
}
+
\sum_{j=m}^{N-1}
\frac {1}{j^{\alpha-1+c \rho}}
+
\left(
\sum_{j=m}^{N-1}
\frac {1}{j^{c \rho+\alpha}}
\right)^2
+
\left(
\frac {d \, m^{\beta}}
{m^{\alpha }}
\right)^2
\right).
\eeq
Putting
those estimates together, we complete the proof.
\QED
\end{proof}
%
\subsection{Estimate on $J$}
For a function
$g = g(\kappa)$
of
$\kappa$,
we set
\[
\triangle g :=
g(\kappa_1) - g(\kappa_2),
\quad
\kappa_1, \kappa_2 \in {\bf R}.
\]
The goal
of this subsection is to prove the H\"older continuity of
$J$
in the following sense.
\begin{proposition}
\label{J2}
Let
$\eta > 0$
such that
$0 < \eta < \alpha - \frac 12$.
Then
\beq
&&
{\bf E}\left[
\sup_{m \le t \le N} | \triangle J^{(m, t)} |^2
\right]
\le
C
| \triangle \kappa |^{2 \eta}.
\eeq
\end{proposition}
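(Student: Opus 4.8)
The plan is to copy, for $\triangle J$, the scheme used for $R$ in Proposition~\ref{R2} — reduce everything to Propositions~\ref{square} and~\ref{martingale} after freezing $\tilde\theta_s$ on cells — but to add one \emph{bootstrap} step, since $\triangle J$ is self-referential: its ``Pr\"ufer--phase part'' is controlled by $\triangle\tilde\theta_s$, which by $(\ref{integraleq})$ is controlled by $\triangle J$ again. (Here and below the estimates are meant to be uniform for $\kappa_1,\kappa_2$ in a fixed compact subset of $(0,\infty)$, as in the applications, where $|\triangle\kappa|\le1$.) First I would reduce to integer endpoints as in Proposition~\ref{R2}: for $t\le N$, $n=\lfloor t\rfloor$, one has $\triangle J^{(m,t)}=\triangle J^{(m,n)}+\triangle J^{(n,t)}$ with $|\triangle J^{(n,t)}|\le C\min(1,|\triangle\kappa|\,n)\,n^{-\alpha}\le C|\triangle\kappa|^{\eta}n^{\eta-\alpha}$, so it suffices to bound ${\bf E}\big[\sup_{m\le n\le N}|\triangle J^{(m,n)}|^{2}\big]$. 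Writing $\theta_s(\kappa)=\kappa s+\tilde\theta_s(\kappa)$ and adding and subtracting $e^{2i\kappa_1 s}e^{2i\tilde\theta_s(\kappa_2)}$ I split $\triangle J^{(m,n)}=J_{\mathrm{osc}}^{(m,n)}+J_{\mathrm{ph}}^{(m,n)}$, where $J_{\mathrm{osc}}$ has integrand $(e^{2i\kappa_1 s}-e^{2i\kappa_2 s})e^{2i\tilde\theta_s(\kappa_2)}V(s)$ and $J_{\mathrm{ph}}$ has integrand $e^{2i\kappa_1 s}\big(e^{2i\tilde\theta_s(\kappa_1)}-e^{2i\tilde\theta_s(\kappa_2)}\big)V(s)$; in both I replace $\tilde\theta_s$ by $\hat\theta_j=\tilde\theta_{j-c\log j}$ exactly as in the proof of Proposition~\ref{R2}, the replacement error over $[j-c\log j,\,j+1]$ being $O(\log j/j^{\alpha})$, so that the leading coefficient of $\omega(j)/j^{\alpha}$ in each term becomes ${\cal F}_{j-c\log j}$-measurable and Propositions~\ref{square}, \ref{martingale} apply.

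\textbf{The frequency-shift part.} The (frozen) coefficient $b_j$ of $\omega(j)/j^{\alpha}$ in $J_{\mathrm{osc}}$ obeys $|b_j|\le C\int_j^{j+1}|e^{2i\kappa_1 s}-e^{2i\kappa_2 s}|\,ds\le C\min(1,|\triangle\kappa|\,j)\le C|\triangle\kappa|^{\eta}j^{\eta}$ (using $\min(1,x)\le x^{\eta}$, $0\le\eta\le1$). Hence $J_{\mathrm{osc}}$ is of the form covered by Propositions~\ref{square}, \ref{martingale} with this $\eta$ and $c_j\equiv C|\triangle\kappa|^{\eta}$, i.e.\ $C_{b,\eta}^{2}=C|\triangle\kappa|^{2\eta}$. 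The first sum in Proposition~\ref{square} is bounded by $C\sum_{j\ge m}\log j/j^{2(\alpha-\eta)}$, which converges precisely because $\eta<\alpha-\tfrac12$; the second sum, and the term $D$ of Proposition~\ref{martingale}, converge once the freezing depth $c$ is chosen with $c\rho$ large and $\beta<\alpha-\eta$ ($\beta$ being at our disposal). The freezing error contributes $\le C|\triangle\kappa|^{\eta}\sum_{j\ge m}\log j/j^{2\alpha-\eta}$. Altogether ${\bf E}\big[\sup_{m\le n\le N}|J_{\mathrm{osc}}^{(m,n)}|^{2}\big]\le C|\triangle\kappa|^{2\eta}$ uniformly in $m,N$.

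\textbf{The Pr\"ufer-phase part and the bootstrap.} By $(\ref{integraleq})$, $\tilde\theta_s(\kappa)=\tfrac1{2\kappa}\,Re\,R^{(0,s)}(\kappa)$, so $\triangle\tilde\theta_s=\tfrac1{2\kappa_1}Re\,\triangle R^{(0,s)}+\big(\tfrac1{2\kappa_1}-\tfrac1{2\kappa_2}\big)Re\,R^{(0,s)}(\kappa_2)$, and crucially $\triangle R^{(0,s)}=\triangle J^{(0,s)}$. Since $\sup_s|R^{(0,s)}(\kappa_2)|$ is bounded in $L^{2}$ (Proposition~\ref{R2} with a fixed cutoff), the frozen coefficient $b_j$ in $J_{\mathrm{ph}}$ satisfies $|b_j|\le\tfrac{C}{\kappa_1}|\triangle J^{(0,\,j-c\log j)}|+C|\triangle\kappa|=:c_j$, which is ${\cal F}_{j-c\log j}$-measurable with ${\bf E}[c_j^{2}]^{1/2}\le C\Phi_N+C|\triangle\kappa|$, $\Phi_N:={\bf E}\big[\sup_{t\le N}|\triangle J^{(0,t)}|^{2}\big]^{1/2}$ (which is a priori finite, $\le CN^{1-\alpha}$, so that it may be absorbed below). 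Applying Propositions~\ref{square}, \ref{martingale} with exponent $0$, and handling the $\hat\theta$-freezing error as in Proposition~\ref{R2} but now with the extra factor $|\triangle\kappa|^{\eta}$ coming from the $\min(1,|\triangle\kappa|\,s)$-bound on the window of width $O(\log j)$ (so it contributes $\le C|\triangle\kappa|^{\eta}\sum_{j\ge m}\log j/j^{2\alpha-\eta}$), I obtain
\[ {\bf E}\big[\sup_{m\le n\le N}|J_{\mathrm{ph}}^{(m,n)}|^{2}\big]^{1/2}\le\varepsilon(m)\big(\Phi_N+|\triangle\kappa|\big)+C|\triangle\kappa|^{\eta}, \]
where $\varepsilon(m)$ is built from the tails $\sum_{j\ge m}\tfrac{c\log j}{j^{\alpha}(j-c\log j)^{\alpha}}$, $\sum_{j\ge m}j^{-(\alpha-1+c\rho)}$, $\sum_{j\ge m}j^{-(c\rho+\alpha)}$, $d\,m^{\beta-\alpha}$, hence $\varepsilon(m)\to0$ as $m\to\infty$ (here $\alpha>\tfrac12$ and ``$c\rho$ large'' are exactly what is used). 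Combining with the $J_{\mathrm{osc}}$ bound, using $\Phi_N\le{\bf E}[|\triangle J^{(0,m)}|^{2}]^{1/2}+{\bf E}\big[\sup_{m\le t\le N}|\triangle J^{(m,t)}|^{2}\big]^{1/2}$ with the trivial bound ${\bf E}[|\triangle J^{(0,m)}|^{2}]^{1/2}\le C_m|\triangle\kappa|$ for the finite head, and fixing $m=m_0$ with $\varepsilon(m_0)<\tfrac12$, one absorbs $\Phi_N$ and gets $\Phi_N\le C|\triangle\kappa|^{\eta}$ uniformly in $N$. Finally, for arbitrary $m$, $\triangle J^{(m,t)}=\triangle J^{(0,t)}-\triangle J^{(0,m)}$ gives $\sup_{m\le t\le N}|\triangle J^{(m,t)}|\le2\sup_{t\le N}|\triangle J^{(0,t)}|$, whence ${\bf E}\big[\sup_{m\le t\le N}|\triangle J^{(m,t)}|^{2}\big]\le4\Phi_N^{2}\le C|\triangle\kappa|^{2\eta}$.

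\textbf{Main obstacle.} The $J_{\mathrm{osc}}$ estimate is essentially a verbatim repetition of the $R_1$ computation of Proposition~\ref{R2}. The real difficulty is $J_{\mathrm{ph}}$: because $\triangle\tilde\theta_s$ is itself governed by $\triangle J$, the estimate does not close in one pass, forcing the self-improving inequality above. The two delicate points are (i) arranging that \emph{every} error term — in particular the $\tilde\theta_s\rightsquigarrow\hat\theta_j$ freezing errors inside $J_{\mathrm{ph}}$ — carries a genuine factor $|\triangle\kappa|^{\eta}$ and not just a small constant, obtained by applying $\min(1,|\triangle\kappa|\cdot\mathrm{time})\le|\triangle\kappa|^{\eta}(\mathrm{time})^{\eta}$ on the $O(\log j)$-window around cell $j$ (with, at one point, a secondary application of the same estimates, or of $(\ref{theta-kappa})$ via supercriticality, to the short increment $\triangle J^{(j-c\log j,s)}$); and (ii) driving the feedback coefficient $\varepsilon(m)$ below $1$, which is why $m$ is carried as a free parameter throughout and why convergence of the relevant series — i.e.\ $\alpha>\tfrac12$ together with a sufficiently deep freezing $c$ — is essential.
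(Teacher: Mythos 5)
Your proposal is correct and follows essentially the same route as the paper: reduce to integer endpoints; split $\triangle J^{(m,n)}$ into a frequency-shift piece and a Pr\"ufer-phase piece; freeze $\tilde\theta_s$ at $\hat\theta_j=\tilde\theta_{j-c\log j}$ so the cell coefficients become ${\cal F}_{j-c\log j}$-measurable and Propositions~\ref{square} and~\ref{martingale} apply; and close the self-referential estimate by choosing $m$ large enough that the feedback coefficient is below $1$, using $(\ref{theta-kappa})$ for the finite head and Proposition~\ref{R2} for the $R$-term. The only cosmetic difference is the order of operations (you take $\triangle$ and then freeze, the paper freezes $J^{(m,n)}=J_1+J_2$ first and then differences each piece), which produces the same collection of terms and the same bounds.
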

The proof
is \tmgray{done} based on some ideas from \cite{KU}.
Set
$n = \lfloor t \rfloor$.
We decompose
\beq
J^{(m,t)}(\kappa)
&=:&
J^{(m, n)}(\kappa)
+
J^{(n, t)}(\kappa)
\\
\mbox{ where }\quad
J^{(m,n)}(\kappa)
&=&
\int_m^n
e^{2 i\theta_s(\kappa)} V(s) ds,
\quad
J^{(n, t)}(\kappa)
=
\int_n^t
e^{2 i\theta_s(\kappa)} V(s) ds.
\eeq
Out strategy is as follows:
we aim at proving the estimates
\begin{equation}
| \triangle J^{\sharp} |
\le
C_1
| \triangle J^{(m)} |
+
C_2
\sup_{m \le t \le N}
| \triangle J^{(m, t)} |
+
C_3
| \triangle \kappa |^{\eta} \,,
\label{strategy}
\end{equation}
where
$\sharp = (m, n)$ 
or 
$(n, t)$ 
and
$C_2 = o(1)$
as
$m \to \infty$.
%
\subsubsection{Estimate for
$\triangle J^{(n, t)}$}
\beq
\triangle J^{(n, t)}
&=&
\frac {\omega(n)}{n^{\alpha}}
\int_n^t
\Biggl\{
\left(
e^{2i \kappa_1 s} - e^{2i \kappa_2 s}
\right)
e^{2i\tilde{\theta}_s (\kappa_1)} f(s-n)
\\
&& \qquad
+
e^{2i \kappa_2 s}
\left(
e^{2i \tilde{\theta}_s(\kappa_1)}
-
e^{2i \tilde{\theta}_s(\kappa_2)}
\right)
f(s-n)
\Biggr\}
\\
&=:&
\triangle J^{(n, t)}_1 + \triangle J^{(n, t)}_2.
\eeq
Here we use
\begin{equation}
| e^{i \theta_1} - e^{i \theta_2} |
\le
C_{\eta}
| \theta_1 - \theta_2 |^{\eta},
\quad
\eta \in [0, 1] \,,
\label{theta}
\end{equation}
which yields
\beq
| \triangle J^{(n, t)}_1 |
& \le &
\frac {C}{n^{\alpha}}
n^{\eta}
| \triangle \kappa |^{\eta},
\quad
| \triangle J^{(n, t)}_2 |
 \le
\frac {C}{n^{\alpha}}
\int_n^t
| \triangle \tilde{\theta}_s | ds.
\eeq
For
$\triangle J_2$,
we estimate
$\tilde{\theta}_s$:
\beq
\triangle \tilde{\theta}_s
&=&
\frac {1}{2 \kappa_1}Re
\int_0^s
\left(
e^{2i \theta_u(\kappa_1)}-1
\right)
V(u)
du
-
\frac {1}{2 \kappa_2}Re
\int_0^s
\left(
e^{2i \theta_u(\kappa_2)}-1
\right)
V(u)
du
\\
&=&
\frac {1}{2 \kappa_1} Re
\int_0^s
\left(
e^{2i \theta_u(\kappa_1)}
-
e^{2i \theta_u(\kappa_2)}
\right)
V(u) du
\\
&& \quad+
\left(
\frac {1}{2 \kappa_1} - \frac {1}{2 \kappa_2}
\right)Re
\int_0^s
\left(
e^{2i \theta_u(\kappa_2)}-1
\right)V(u) du.
\eeq
Thus
\begin{equation}
| \triangle \tilde{\theta}_s |
\le
C
\left(
| \triangle J^{(m)} |
+
| \triangle J^{(m, s)} |
+
| \triangle \kappa |
| R^{(s)} (\kappa_2) |
\right)
\label{deltathetatilde}
\end{equation}
and therefore
\beq
| \triangle J_2^{(n,t)} |
& \le &
\frac {C}{n^{\alpha}}
\int_n^t
| \triangle \tilde{\theta}_s | ds
\\
& \le &
\frac {C}{n^{\alpha}}
\left(
| \triangle J^{(m)} |
+
\sup_{n \le s \le t}
| \triangle J^{(m,s)} |
+
| \triangle \kappa |
\sup_{n \le s \le t}
| R^{(s)} (\kappa_2) |
\right).
\eeq
Putting together,
we have
\begin{eqnarray}
&&
| \triangle J^{(n, t)} |
\le
\frac {1}{n^{\alpha - \eta}}
| \triangle \kappa |^{\eta}
\nonumber
\\
&&\qquad
+
\frac {C}{n^{\alpha}}
\left(
| \triangle J^{(m)} |
+
\sup_{n \le s \le t}
| \triangle J^{(m,s)} |
+
| \triangle \kappa |
\sup_{n \le s \le t}
| R^{(s)} (\kappa_2) |
\right).
\label{Jnt}
\end{eqnarray}
%
\subsubsection{Estimate for $\triangle J^{(m,n)}$}
We next decompose
\beq
J^{(m,n)}(\kappa)
&=&
\sum_{j=m}^{n-1}
\frac {\omega(j)}{j^{\alpha}}
\int_j^{j+1}
e^{2i \kappa s + 2i \hat{\theta}_s(\kappa)} f(s-j) ds
\\
&&+
\sum_{j=m}^{n-1}
\frac {\omega(j)}{j^{\alpha}}
\int_j^{j+1}
e^{2i \kappa s + 2i \hat{\theta}_j(\kappa)}
\left(
e^{2i
(\tilde{\theta}_s(\kappa) - \hat{\theta}_j(\kappa))
}
-1
\right)
f(s-j) ds
\\
&=:&
J_1^{(m, n)}(\kappa) + J_2^{(m, n)}(\kappa) \,.
\eeq
The terms 
$J_1^{(m, n)}(\kappa)$ 
and 
$J_2^{(m, n)}(\kappa)$ 
will be estimated separately.
%
\paragraph{Estimate on
$J_{1}$.}

We
further decompose
\beq
\triangle J_1^{(m, n)}
&=&
\sum_{j=m}^{n-1}
\frac {\omega(j)}{j^{\alpha}}
\int_j^{j+1}
\left(
e^{2i \kappa_1 s} - e^{2i \kappa_2 s}
\right)
e^{2i \hat{\theta}_j (\kappa_1)}
f(s-j) ds
\\
&& +
\sum_{j=m}^{n-1}
\frac {\omega(j)}{j^{\alpha}}
\int_j^{j+1}
e^{2i \kappa_2 s}
\left(
e^{2i \hat{\theta}_j(\kappa_1)}
-
e^{2i \hat{\theta}_j(\kappa_2)}\right)
f(s-j) ds
\\
&=:&
\triangle J_{1-1}^{(m, n)}
+
\triangle J_{1-2}^{(m, n)}.
\eeq
(1)
$\triangle J_{1-1}$ : Let
\beq
b_j
&:=&
\int_j^{j+1}
\left(
e^{2i \kappa_1 s} - e^{2i \kappa_2 s}
\right)
e^{2i \hat{\theta}_j(\kappa_1)}
f(s-j) ds.
\eeq
Using
(\ref{theta}),
we have
$
| b_j | \le C_{\eta} | \triangle \kappa |^{\eta}
j^{\eta}.
$
By Proposition \ref{square} with
$c_j (\omega) = C| \triangle \kappa |^{\eta}$,
\beq
{\bf E}[ | J^{(m, N)} |^2]
& \le &
C
\left(
\sum_{j=m}^{N-1}
\frac {c \log j}{
j^{\alpha - \eta} (j - c \log j)^{\alpha - \eta}
}
+
\sum_{j=m}^{N-1}
\frac {1}
{j^{\alpha-2 \eta -1 + c \rho}}
\right)
| \triangle \kappa |^{2 \eta}.
\eeq
Then by Proposition \ref{martingale} with
$c_j (\omega) = | \triangle \kappa |^{\eta}$,
\begin{eqnarray}
&&
{\bf E}\left[
\left|
\sup_{m \le n \le N}
\triangle J_{1-1}^{(m, n)}
\right|^2
\right]
 \le
C{\bf E}\left[
\left|
\triangle J_{1-1}^{(m, N)}
\right|^2
\right]
+
D^2
\nonumber
\\
&&
\le
C
\Biggl\{
\sum_{j=m}^{N-1}
\frac {c \log j}{
j^{\alpha - \eta} (j - c \log j)^{\alpha - \eta}
}
+
\sum_{j=m}^{N-1}
\frac {1}
{j^{\alpha-2 \eta -1 + c \rho}}
\Biggr\}
| \triangle \kappa |^{2\eta}
+
D^2
\label{J1-1}
\end{eqnarray}
where
\[
D
:=
\left(
\sum_{j=m}^{N-1}
\frac {1}{j^{c \rho+\alpha-\eta}}
+
\frac {d \, m^{\beta}}{m^{\alpha - \eta}}
\right)
| \triangle \kappa |^{\eta}.
\]
(2)
$\triangle J_{1-2}$ : Let
\beq
b_j
&=&
\int_j^{j+1}
e^{2i \kappa_2 s}
\left(
e^{2i \hat{\theta}_j(\kappa_1)}
-
e^{2i \hat{\theta}_j(\kappa_2)}
\right)
f(s-j) ds.
\eeq
By (\ref{deltathetatilde}),
\beq
| b_j |
\le
C | \triangle \hat{\theta}_j |
\le
C\left(
| \triangle J^{(m)} |
+
| \triangle J^{(m, j - c \log j)} |
+
| \triangle \kappa|
| R^{(j- c \log j)} (\kappa_2) |
\right).
\eeq
Thus by
Proposition \ref{square} with
$\eta = 0$,
we have
\begin{eqnarray}
&&
{\bf E}\left[
| \triangle J_{1-2}^{(m, n)} |^2
\right]
\le
\left(
\sum_{j=m}^{n-1}
\frac {1}{j^{\alpha-1+c \rho}}
+
\frac {c \log j}{ (j-c \log j)^{\alpha} j^{\alpha}}
\right)
\nonumber
\\
&&\qquad\quad
\times
{\bf E} \left[
| \triangle J^{(m)} |^2
+
\sup_{m \le j \le n} | J^{(m,j)} |^2
+
| \triangle \kappa |^2
\sup_{m \le j \le n} | R^{(j)} |^2
\right].
\label{J2one}
\end{eqnarray}
Using Proposition \ref{martingale} with
$\eta = 0$
and
\[
c_j (\omega)
=
| \triangle J^{(m)} |
+
| \triangle J^{(m, j - c \log j)} |
+
| \triangle \kappa|
| R^{(j- c \log j)} (\kappa_2) |,
\]
yields
\begin{equation}
{\bf E}\left[
\sup_{m \le n \le N} | \triangle J_{1-2}^{(m, n)} |^2
\right]
\le
C
{\bf E}\left[
| \triangle J_{1-2}^{(m, N)} |^2
\right]
+
{\bf E} [ |D|^2 ]
\label{J2two}
\end{equation}
where
\beq
D
&=&
\left(
\sum_{j=m}^{N-1}
\frac {1}{j^{c \rho+\alpha-\eta}}
+
\frac {d \, m^{\beta}}{m^{\alpha - \eta}}
\right)
\sup_{m \le j \le N} c_j (\omega).
\eeq
By
(\ref{J2one}), (\ref{J2two}),
we have
\begin{eqnarray}
&&
{\bf E}\left[
\sup_{m \le n \le N} | \triangle J_{1-2}^{(m, n)} |^2
\right]
\nonumber
\\
&&
\le
C
\Biggl\{
\sum_{j=m}^{N-1}
\left(
\frac {1}{j^{\alpha-1 + c \rho}}
+
\frac {c \log j}{ (j-c \log j)^{\alpha} j^{\alpha}}
\right)
+
\left(
\sum_{j=m}^{N-1}
\frac {1}{j^{c \rho+\alpha-\eta}}
\right)^2
+
\left(
\frac {d \, m^{\beta}}{m^{\alpha - \eta}}
\right)^2
\Biggr\}
\nonumber
\\
&& \qquad \times
{\bf E}\left[
| \triangle J^{(m)} |^2
+
\sup_{m \le j \le N}
| \triangle J^{(m, j)} |^2
+
| \triangle \kappa|^2
\sup_{m \le j \le N}
| R^{(j)} (\kappa_2) |^2
\right].
\label{J1-2}
\end{eqnarray}
%
\paragraph{Estimate on
$J_{2}$.}
Let
$
\hat{E}_{s, j}
=
e^{2i
(\tilde{\theta}_s(\kappa) - \hat{\theta}_j(\kappa))
}
-1.
$
Then we have
\beq
\triangle J_2^{(m, n)}
&=&
\sum_{j=m}^{n-1}
\frac {\omega(j)}{j^{\alpha}}
\int_j^{j+1}
\Biggl\{
\left(
e^{2 i \kappa_1 s} - e^{2i \kappa_2 s}
\right)
e^{2i \hat{\theta}_j (\kappa_1) }
\hat{E}_{s, j}(\kappa_1)
\\
&&+
e^{2 \kappa_2 s}
\left(
e^{2i \hat{\theta}_j(\kappa_1)}
-
e^{2i \hat{\theta}_j(\kappa_2)}
\right)
\hat{E}_{s, j}(\kappa_1)
\\
&&+
e^{2i \kappa_2 s + 2i \hat{\theta}_j(\kappa_2)}
\left(
\hat{E}_{s, j}(\kappa_1)
-
\hat{E}_{s, j}(\kappa_2)
\right)
\Biggr\}
f(s-j) ds.
\eeq
Here we use
$
| \hat{E}_{s, j}(\kappa) |
\le
C j^{- \alpha} \log j
$
unifomrly w.r.t.
$\kappa$.
Moreover by
(\ref{integraleq}),
\beq
|
\triangle \hat{E}_{s, j}
|
&=&
\left|
e^{2i (\tilde{\theta}_s(\kappa_1) - \hat{\theta}_j(\kappa_1))}
-
e^{2i (\tilde{\theta}_s(\kappa_2) - \hat{\theta}_j(\kappa_2))}
\right|
\\
& \le &
C
\left|
(\tilde{\theta}_s(\kappa_1) - \hat{\theta}_j(\kappa_1))
-
(\tilde{\theta}_s(\kappa_2) - \hat{\theta}_j(\kappa_2))
\right|
\\
& \le &
C
\Biggl\{
\frac {c\log j}{(j - c \log j)^{\alpha}}
\cdot
j^{\eta} \cdot | \triangle \kappa |^{\eta}
+
\frac {c \log j}{(j - c \log j)^{\alpha}}
\sup_{j - c \log j \le u \le s}
| \triangle \tilde{\theta}_u |
\\
&&+
| \triangle \kappa |
\cdot
\frac {c \log j}{(j - c \log j)^{\alpha}}
\Biggr\}.
\eeq
Substituting the above bound,
we have
\begin{eqnarray}
&&
{\bf E}\left[
\sup_{m \le n \le N} | \triangle J_2^{(m,n)} |^2
\right]
\nonumber
\\
& \le &
\left\{
\sum_{j=m}^{n-1}
\left(
\frac {\log j}{j^{2 \alpha - \eta}}
+
\frac {c \log j}
{(j - c \log j)^{\alpha} j^{\alpha - \eta}}
+
\frac {c \log j}{(j - c \log j)^{\alpha} j^{\alpha}}
\right)
\right\}^2
| \triangle \kappa |^{2\eta}
\nonumber
\\
&&\quad+
\left\{
\sum_{j=m}^{n-1}
\left(
\frac {c \log j}{j^{2 \alpha}}
+
\frac {c \log j}{(j - c \log j)^{\alpha}j^{\alpha}}
\right)
\right\}^2
\times
\Biggl(
{\bf E}[ | \triangle J^{(m)}|^2]
\nonumber
\\
&&\quad+
{\bf E}[
| \sup_{m \le j \le N}| \triangle J^{(m, j)} |^2
]
+
| \triangle \kappa |^2
{\bf E}[
\sup_{0 \le j \le N}| R^{(j)}(\kappa_2) |^2
]
\Biggr).
\label{J22}
\end{eqnarray}
%
\subsubsection{Proof of Proposition \ref{J2}}
By
(\ref{Jnt}), (\ref{J1-1}), (\ref{J1-2}) and
(\ref{J22}),
we have
\beq
&&
{\bf E}\left[
\sup_{m \le n \le N} | \triangle J^{(m, n)} |^2
\right]
\le
a_m
| \triangle \kappa |^{2 \eta}
\\
&&
+
b_m
\left(
{\bf E}[ | \triangle J^{(m)}|^2] +
{\bf E}[
| \sup_{m \le j \le N}| \triangle J^{(m, j)} |^2
]
+
| \triangle \kappa |^2
{\bf E}[
\sup_{j \le N}| R^{(j)}(\kappa_2) |^2
]
\right)
\eeq
where
$a_m,b_m = o(1)$,
as
$m \to \infty$.
Take
$m \gg 1$
s.t.
$b_m < 1$.
Then
\beq
&&
{\bf E}\left[
\sup_{m \le n \le N} | \triangle J^{(m, n)} |^2
\right]
\le
c_m
| \triangle \kappa |^{2 \eta}
+
d_m
\left(
{\bf E}[ | \triangle J^{(m)}|^2]
+
| \triangle \kappa |^2
{\bf E}[
\sup_{j \le N}| R^{(j)}(\kappa_2) |^2
]
\right)
\eeq
Here
we use the fact that
$
{\bf E}[ | \triangle J^{(m)}|^2]
\le
C
| \triangle \kappa |^2
$
(which follows from
(\ref{theta-kappa}))
and
Proposition \ref{R2},
completing the proof.
%

\section{Holder continuity : p-th power}
In this section,
we assume {\bf A} and prove estimates on the p-th power moment of
$R$
and
$J$
using BDG inequalities.
\subsection{Estimate on R}
\begin{proposition}
\label{Rp}
Assume {\bf A}.
Then
\beq
{\bf E}
\left[
\sup_{m \le t \le N}
| R^{(m, t)} (\kappa) |^p
\right]
\le
\left(
\sum_{j=m}^{N-1}
\frac {{\bf E}[\omega(j)^2]}{j^{2\alpha}}
\right)^{p/2}
+
\left(
\sum_{j=m}^{N-1}
\frac {1}{j^{\alpha}}
\right)^p.
\eeq
\end{proposition}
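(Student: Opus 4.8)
The plan is to mimic the structure of the proof of Proposition \ref{R2}, but now exploiting the i.i.d.\ hypothesis \textbf{A} to run a genuine martingale/BDG argument rather than the conditioning trick. First I would fix $t \le N$, set $n = \lfloor t \rfloor$, and use the elementary bound $|R^{(n,t)}| \le C n^{-\alpha}$ to reduce to estimating ${\bf E}[\sup_{m \le n \le N} |R^{(m,n)}|^p]$. Next I would perform the same three-term decomposition $R^{(m,n)} = R_1^{(m,n)} + R_2^{(m,n)} + R_3^{(m,n)}$ as in \eqref{decompositionR}, splitting off the "frozen phase" part $\hat\theta_j(\kappa) = \tilde\theta_{j-c\log j}(\kappa)$ which is ${\cal F}_{j-c\log j}$-measurable. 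The key point is that under \textbf{A}, for $R_1$ the summands $\frac{\omega(j)}{j^\alpha} b_j$ with $b_j$ as defined there (the integral of $e^{2i\kappa s}f(s-j)$ times $e^{2i\hat\theta_j(\kappa)}$, hence ${\cal F}_{j-c\log j}$-measurable) form a martingale difference sequence: conditionally on ${\cal F}_{j-1} \supset {\cal F}_{j-c\log j}$, we have ${\bf E}[\omega(j)\mid {\cal F}_{j-1}] = 0$ because $\omega(j)$ is independent of the past. So $R_1^{(m,n)}$ is a martingale in $n$, and I would apply the BDG inequality to get
\[
{\bf E}\Big[\sup_{m \le n \le N} |R_1^{(m,n)}|^p\Big]
\le C \, {\bf E}\Big[\Big(\sum_{j=m}^{N-1} \frac{\omega(j)^2 |b_j|^2}{j^{2\alpha}}\Big)^{p/2}\Big]
\le C \Big(\sum_{j=m}^{N-1}\frac{{\bf E}[\omega(j)^2]}{j^{2\alpha}}\Big)^{p/2},
\]
using $|b_j| \le C$ and (for the last step, when $p \ge 2$) Minkowski's inequality in $L^{p/2}$ together with $|\omega(j)|\le 1$; for $1 < p < 2$ one uses concavity of $x \mapsto x^{p/2}$ and Jensen directly.

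For $R_2^{(m,n)}$ I would reuse the deterministic bound already established in the proof of Proposition \ref{R2}, namely $|R_2^{(m,n)}(\kappa)| \le \sum_{j=m}^{n-1} \frac{c'\log j}{j^{2\alpha}} \le \sum_{j=m}^{N-1}\frac{1}{j^\alpha}$ (absorbing the $\log j / j^\alpha$ factor into a constant times $j^{-\alpha}$ for large $j$, and noting $2\alpha > \alpha$), which is pathwise and hence trivially controls the supremum and its $p$-th moment by $\big(\sum_{j=m}^{N-1} j^{-\alpha}\big)^p$. For $R_3^{(m,n)}$, the term $e^{2i\kappa s + 2i\hat\theta_j(\kappa)}(e^{2i(\tilde\theta_s(\kappa)-\hat\theta_j(\kappa))}-1)$ still has its coefficient-before-$\omega(j)$ measurable with respect to ${\cal F}_{j-c\log j}$ (the $\hat\theta_j$ is frozen, and the correction factor $\hat E_{s,j}$ depends on $\tilde\theta_s$, which — wait, this needs care) — actually the cleanest route is to treat $R_3$ exactly as $R_1$ via BDG after observing $|b_j| \le C j^{-\alpha}\log j$ for its coefficient, giving a contribution bounded by $\big(\sum \frac{\log^2 j}{j^{4\alpha}}\big)^{p/2}$, which is again dominated by the stated right-hand side. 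Combining the three estimates with the triangle inequality in $L^p$ yields the claim.

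The main obstacle I anticipate is the measurability bookkeeping for $R_3$: unlike $R_1$, the factor multiplying $\omega(j)$ in $R_3$ involves $\tilde\theta_s(\kappa)$ for $s \in [j,j+1]$, which is \emph{not} ${\cal F}_{j-c\log j}$-measurable, so one cannot directly invoke BDG on $R_3$ as written. The fix is the same as in Proposition \ref{R2}: one does not need $\tilde\theta_s$ itself but only the \emph{increment} $\tilde\theta_s(\kappa) - \hat\theta_j(\kappa)$, which is bounded deterministically (pathwise) by $c'\log j / j^\alpha$ uniformly in $\kappa$; hence $R_3$, like $R_2$, can be controlled by a purely deterministic sum $\sum_{j=m}^{N-1} \frac{c\log j}{j^{2\alpha}}$, with no martingale structure needed at all. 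So in the end only $R_1$ requires BDG, and the other two pieces are handled pathwise — this is the structural simplification that \textbf{A} buys us over case \textbf{B}, and the proof reduces to one application of BDG plus Minkowski and two elementary summation bounds.
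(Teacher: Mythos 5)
Your proposal follows the same overall architecture as the paper's proof — split $R^{(m,t)}=R^{(m,n)}+R^{(n,t)}$, decompose $R^{(m,n)}$ into three pieces, apply BDG to the martingale pieces and a pathwise bound to the correction piece — so this is essentially the right route. Two points of comparison are worth making.

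First, you kept the ``frozen at $j-c\log j$'' phase $\hat\theta_j(\kappa)=\tilde\theta_{j-c\log j}(\kappa)$ from the case-\textbf{B} decomposition \eqref{decompositionR}. The paper explicitly uses a \emph{different} (and simpler) decomposition in case \textbf{A}: since $\omega(j)$ is independent of $\mathcal{F}_{j-1}$, there is no need to freeze the phase $c\log j$ steps back — the paper replaces $\hat\theta_j$ by $\tilde\theta_j$ itself, which is already $\mathcal{F}_{j-1}$-measurable. This is what makes each $b_j$ predictable and each summand a martingale difference. It also shrinks the correction factor $|\tilde\theta_s(\kappa)-\tilde\theta_j(\kappa)|$ for $s\in[j,j+1]$ to $O(j^{-\alpha})$ (an integral over a window of length $\le 1$), eliminating the spurious $\log j$ factor that your bound for the correction term carries. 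Your version still works in case \textbf{A} (since $\mathcal{F}_{j-c\log j}\subset\mathcal{F}_{j-1}$), but you should notice the paper flags its decomposition as ``slightly different from \eqref{decompositionR}'' — that is the structural simplification $\textbf{A}$ buys, not the one you identified.

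Second, you have mislabelled the pieces. In \eqref{decompositionR} the three summands are: $R_1$ (the pure oscillating term), $R_2$ (the term carrying the factor $e^{2i(\tilde\theta_s-\hat\theta_j)}-1$), and $R_3$ (the ``$-1$'' term, i.e.\ $-\sum_j \frac{\omega(j)}{j^\alpha}\int_j^{j+1}f(s-j)\,ds$). You describe both $R_2$ and $R_3$ as if they were the correction term, and worry at length about measurability of $\tilde\theta_s$ inside $R_3$ — but $R_3$ has no $\tilde\theta_s$ in it at all; its coefficient $\int_0^1 f$ is deterministic. $R_3$ is the \emph{easiest} piece: a pure martingale difference sum, handled by BDG exactly like $R_1$ (which is what the paper does — ``this is similar to (1) above''). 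As written, the ``$-1$'' term is never actually addressed in your proposal. The fix is trivial, but you should sort out the bookkeeping: BDG for $R_1$ and $R_3$, a pathwise deterministic bound only for $R_2$.
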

\begin{proof}
Set
$n = \lfloor t \rfloor$.
Then
\beq
R^{(m,t)}
&=&
R^{(m, n)} + R^{(n, t)}
\\
R^{(n, t)}
&=&
\int_n^t
\left(
e^{2i \theta_s(\kappa)} - 1
\right)
V(s) ds,
\quad
| R^{(n, t)} |
\le
C n^{- \alpha}
\eeq
so that it suffices to estimate
${\bf E}[ \sup_{m \le n \le N}
| R^{(m, n)} |^2 ]$.
We decompose
\beq
R^{(m,n)}(\kappa)
&:=&
\sum_{j=m}^{n-1}
\frac {\omega(j)}{j^{\alpha}}
\int_j^{j+1}
\Biggl\{
e^{2i \kappa s +
2i \tilde{\theta}_j(\kappa)}
\\
&& \quad
+
e^{2i \kappa s +
2i \tilde{\theta}_j(\kappa)}
\left(
e^{2i
(\tilde{\theta}_s(\kappa)
-
\tilde{\theta}_j(\kappa))
}
-1
\right)
-1
\Biggr\}
f(s-j) ds
\\
&=:&
R_1^{(m,n)}(\kappa)
+
R_2^{(m,n)}(\kappa)
+
R_3^{(m,n)}(\kappa).
\eeq
which is slightly different from
(\ref{decompositionR}). \\
(1)
$R_1^{(m,n)} (\kappa)$ :
\beq
{\bf E}[ | R_1^{(n)}(\kappa) |^2 ]
&=&
\sum_{j=m}^{n-1}
{\bf E}
\Biggl[
\frac {\omega(j)^2}{j^{2\alpha}}
\Biggl|
\int_j^{j+1}
e^{2i \kappa s + 2i
\tilde{\theta}_j(\kappa)}
f(s-j) ds
\Biggr|^2
\Biggr]
%
\le
C
\sum_{j=m}^{n-1}
\frac {{\bf E}[\omega(j)^2]}{j^{2\alpha}}.
\eeq
By BDG,
\begin{equation}
{\bf E}\left[
\sup_{m \le j \le N}
| R_1^{(m, j)}(\kappa)|^p
\right]
\le
C
{\bf E} \left[
| R_1^{(m, N)}(\kappa) |^2
\right]^{p/2}
=
C
\left(
\sum_{j=m}^{N-1}
\frac {{\bf E}[\omega(j)^2]}{j^{2\alpha}}
\right)^{p/2}.
\label{Rpone}
\end{equation}
(2)
$R_2^{(m,n)}(\kappa)$ :
Here we use
\beq
\left|
e^{2i (
\tilde{\theta}_s(\kappa)
-
\tilde{\theta}_j(\kappa)
)}
-1
\right|
& \le &
2
| \tilde{\theta}_s(\kappa)
-
\tilde{\theta}_j(\kappa) |
\le
C j^{- \alpha}
\eeq
and the fact that
$\omega(j)$
is bounded, yielding
\begin{equation}
{\bf E} \left[
\sup_{m \le n \le N}
| R_2^{(m,n)}(\kappa) |^p
\right]
\le
C
\left(
\sum_{j=m}^{N-1}
\frac {1}{j^{2\alpha}}
\right)^p.
\label{Rptwo}
\end{equation}
(3)
$R_3^{(m,n)}(\kappa)$ :
this is similar to (1) above :
\begin{equation}
{\bf E}\left[
\sup_{m \le j \le N}
| R_3^{(m, j)} |^p
\right]
\le
{\bf E}\left[
| R_3^{(m, N)} |^2
\right]^{p/2}
\le
\left(
\sum_{j=m}^{N-1}
\frac {
{\bf E}[\omega(j)^2]
}
{ j^{2 \alpha} }
\right)^{p/2}.
\label{Rpthree}
\end{equation}
By (\ref{Rpone}),
(\ref{Rptwo}), and
(\ref{Rpthree}),
we complete the proof.
\QED
\end{proof}
%

\subsection{Estimate on J}
\begin{proposition}
\label{Jp}
Assume {\bf A}.
Then
\beq
{\bf E}\left[
\sup_{m \le t \le N}
| J^{(m, t)} |^p
\right]
\le
| \triangle \kappa |^{p \eta},
\quad
p \ge 2.
\eeq
\end{proposition}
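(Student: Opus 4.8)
The statement is the $p$-th moment analogue of Proposition~\ref{J2}; I read its left-hand side as ${\bf E}\big[\sup_{m\le t\le N}|\triangle J^{(m,t)}|^p\big]$ with $\triangle J^{(m,t)}:=J^{(m,t)}(\kappa_1)-J^{(m,t)}(\kappa_2)$. The plan is to rerun the proof of Proposition~\ref{J2}, replacing the martingale inequality (Proposition~\ref{martingale}) and the maximal inequality by the Burkholder--Davis--Gundy (BDG) inequality, which is legitimate in case~\textbf{A} because the $\omega(j)$ are i.i.d. with mean zero. Exactly as in the proof of Proposition~\ref{Rp}, under~\textbf{A} there is no need for the logarithmic truncation $\hat\theta_j$: one freezes the Pr\"ufer correction at the left endpoint of each cell, working with $\tilde\theta_j(\kappa)$ directly. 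Throughout I keep $|\triangle\kappa|$ bounded, the only regime that occurs since $\kappa_c=\kappa_0+c/n$.

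First I would fix $t\le N$, set $n=\lfloor t\rfloor$, and split $\triangle J^{(m,t)}=\triangle J^{(m,n)}+\triangle J^{(n,t)}$; the single-cell remainder $\triangle J^{(n,t)}$ is estimated exactly as in~(\ref{Jnt}), by $Cn^{-(\alpha-\eta)}|\triangle\kappa|^\eta$ plus $Cn^{-\alpha}$ times $|\triangle J^{(m)}|+\sup_{n\le s\le t}|\triangle J^{(m,s)}|+|\triangle\kappa|\sup_{n\le s\le t}|R^{(s)}(\kappa_2)|$. For $\triangle J^{(m,n)}$ I would use the decomposition of Section~4 with $\tilde\theta_j$ in place of $\hat\theta_j$: $J^{(m,n)}=J_1^{(m,n)}+J_2^{(m,n)}$, where $J_1$ carries the frozen phase and $J_2$ the increment $e^{2i(\tilde\theta_s-\tilde\theta_j)}-1$, and $\triangle J_1^{(m,n)}=\triangle J_{1-1}^{(m,n)}+\triangle J_{1-2}^{(m,n)}$ according to whether the $\kappa$-difference falls on $e^{2i\kappa s}$ or on $e^{2i\tilde\theta_j(\kappa)}$. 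Each of the three resulting terms has the form $\sum_{j=m}^{n-1}a_j\omega(j)$ with $a_j=b_j/j^\alpha$ and $b_j$ measurable with respect to ${\cal F}_j$ (indeed ${\cal F}_{j-1}$, since $\tilde\theta_s$, $J^{(m,s)}$ and $R^{(s)}$ for $s\le j$ depend only on $\omega(0),\dots,\omega(j-1)$), so the partial sums $\sum_{j=m}^{n-1}a_j\omega(j)$, $n=m,\dots,N$, form a martingale and BDG gives, for $p\ge2$,
\[
{\bf E}\!\left[\sup_{m\le n\le N}\Big|\sum_{j=m}^{n-1}a_j\omega(j)\Big|^p\right]
\le C\,{\bf E}\!\left[\Big(\sum_{j=m}^{N-1}a_j^2\,\omega(j)^2\Big)^{p/2}\right],
\]
with $C$ depending only on the fixed $p$.

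I would then substitute the pointwise bounds already obtained in Section~4. For $\triangle J_{1-1}$, $|b_j|\le C_\eta|\triangle\kappa|^\eta j^\eta$ by~(\ref{theta}), so $\sum_j a_j^2\le C|\triangle\kappa|^{2\eta}\sum_{j\ge m}j^{2\eta-2\alpha}$, a convergent series since $\eta<\alpha-\frac12$, giving a contribution $\le C|\triangle\kappa|^{p\eta}$ uniformly in $N$. For $\triangle J_{1-2}$ and $\triangle J_2$, $|b_j|\le Cj^{-\alpha}c_j(\omega)$ with $c_j(\omega)=|\triangle J^{(m)}|+|\triangle J^{(m,j)}|+|\triangle\kappa|\,|R^{(j)}(\kappa_2)|$, using~(\ref{deltathetatilde}) (and, for $\triangle J_2$, the extra factor $|\tilde\theta_s-\tilde\theta_j|\le Cj^{-\alpha}$), so $\sum_j a_j^2\omega(j)^2\le C\sum_{j\ge m}j^{-2\alpha}c_j(\omega)^2$; applying the power-mean inequality $(\sum_j w_jy_j)^{p/2}\le(\sum_j w_j)^{p/2-1}\sum_j w_jy_j^{p/2}$ with deterministic weights $w_j=j^{-2\alpha}$ and then taking expectation bounds the right-hand side of BDG by $\big(\sum_{j\ge m}j^{-2\alpha}\big)^{p/2}\sup_{m\le j\le N}{\bf E}[c_j(\omega)^p]$, hence by a factor that is $o(1)$ as $m\to\infty$, uniformly in $N$, times ${\bf E}[|\triangle J^{(m)}|^p]+{\bf E}[\sup_{j\le N}|\triangle J^{(m,j)}|^p]+|\triangle\kappa|^p{\bf E}[\sup_{j\le N}|R^{(j)}(\kappa_2)|^p]$.

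Collecting these estimates for $\triangle J^{(n,t)}$, $\triangle J_{1-1}$, $\triangle J_{1-2}$ and $\triangle J_2$ yields, as in Proposition~\ref{J2}, an inequality of the shape
\[
{\bf E}\!\left[\sup_{m\le t\le N}|\triangle J^{(m,t)}|^p\right]
\le a_m|\triangle\kappa|^{p\eta}
+ b_m\Big({\bf E}[|\triangle J^{(m)}|^p]+{\bf E}\big[\sup_{j\le N}|\triangle J^{(m,j)}|^p\big]+|\triangle\kappa|^p{\bf E}\big[\sup_{j\le N}|R^{(j)}(\kappa_2)|^p\big]\Big),
\]
with $a_m,b_m=o(1)$ as $m\to\infty$. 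Taking $m$ large enough that $b_m<1$ lets me absorb the middle term into the left-hand side; then ${\bf E}[|\triangle J^{(m)}|^p]\le C|\triangle\kappa|^p$ follows from~(\ref{theta-kappa}), Proposition~\ref{Rp} bounds ${\bf E}[\sup_{j\le N}|R^{(j)}(\kappa_2)|^p]$ uniformly in $N$, and $|\triangle\kappa|^p\le C|\triangle\kappa|^{p\eta}$ for bounded $|\triangle\kappa|$ (here one uses $\eta\le1$), so ${\bf E}[\sup_{m\le t\le N}|\triangle J^{(m,t)}|^p]\le C|\triangle\kappa|^{p\eta}$. I expect the main obstacle to be, as in Proposition~\ref{J2}, the self-referential structure: $\triangle J_{1-2}$ and $\triangle J_2$ are controlled in terms of $\sup_j|\triangle J^{(m,j)}|$ itself, so one must check that the prefactors $\big(\sum_{j\ge m}j^{-2\alpha}\big)^{p/2}$ and $m^{-p\alpha}$ genuinely tend to $0$ as $m\to\infty$, uniformly in $N$ and with the BDG constant depending only on the fixed $p$ — this is exactly what licenses the absorption. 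A secondary, routine point is verifying the ${\cal F}_j$-measurability of the $b_j$ in each of the three pieces, so that BDG applies.
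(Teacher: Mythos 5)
Your proof follows the paper's own strategy almost step for step: the same decomposition $\triangle J^{(m,t)}=\triangle J_0^{(n,t)}+\triangle J_{1-1}^{(m,n)}+\triangle J_{1-2}^{(m,n)}+\triangle J_2^{(m,n)}$ (with $\tilde\theta_j$ replacing the logarithmically truncated $\hat\theta_j$, exactly as the paper does in case \textbf{A}), the same pointwise bounds on the coefficients, BDG for each martingale piece, and the same self-referential absorption argument with $b_m\to 0$. The one place you are in fact more careful than the paper is the $p$-th moment step: the paper invokes BDG in the loose form ${\bf E}[|\triangle J_1^{(m,N)}|^2]^{p/2}\simeq{\bf E}[\sup_j|\triangle J_1^{(m,j)}|^p]$, i.e.\ it raises the $L^2$ bound to the power $p/2$ and then identifies it with the $p$-th maximal moment, whereas the correct direction of BDG controls ${\bf E}[\sup|\cdot|^p]$ by ${\bf E}[(\sum a_j^2\omega_j^2)^{p/2}]$, the $p/2$-th moment of the (random) quadratic variation; your power-mean inequality $\bigl(\sum w_jy_j\bigr)^{p/2}\le\bigl(\sum w_j\bigr)^{p/2-1}\sum w_jy_j^{p/2}$ with $w_j=j^{-2\alpha}$ is exactly what is needed to pass from this to $\bigl(\sum_{j\ge m}j^{-2\alpha}\bigr)^{p/2}\sup_j{\bf E}[c_j^p]$ and make that step rigorous. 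So the route is the same; your write-up simply patches a small imprecision in the paper's application of BDG.
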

\begin{proof}
Set
$n = \lfloor t \rfloor$.
Then
we decompose
\beq
J^{(m, t)}(\kappa)
&=:&
J^{(m, n)}(\kappa)
+
J_0^{(n, t)}(\kappa)
\eeq
and
\beq
J^{(m, n)}(\kappa)
&=&
\sum_{j=m}^{n-1}
\frac {\omega(j)}{j^{\alpha}}
\int_j^{j+1}
e^{2i \kappa s +
2i \tilde{\theta}_j(\kappa)}
f(s-j) ds
\\
&& \quad +
\sum_{j=m}^{n-1}
\frac {\omega(j)}{j^{\alpha}}
\int_j^{j+1}
e^{2i \kappa s +
2i \tilde{\theta}_j(\kappa)}
\left(
e^{2i
(\tilde{\theta}_s(\kappa)
-
\tilde{\theta}_j(\kappa))
}
-1
\right)
f(s-j)
\\
&=:&
J_1^{(m, n)}(\kappa)
+
J_2^{(m, n)}(\kappa).
\eeq
Out strategy is
to obtain inequalities similar to
(\ref{strategy}).
\\

\noindent
(0) Estimate for
$\triangle J_0$ :
this can be done as
(\ref{Jnt}).
\beq
| \triangle J_0 |
& \le &
\frac {1}{n^{\alpha - \eta}}
| \triangle \kappa |^{\eta}
+
\frac {1}{n^{\alpha}}
\Biggl(
\sup_{s \le t} | \triangle J^{(s)} |
+
| \triangle \kappa |
\sup_{s \le t} | R^{(s)} |
\Biggr).
\eeq
(1)Estimate for
$\triangle J_1$ :
By the argument used
in the proof of Proposition \ref{J2},
we have
\beq
&&
{\bf E}\left[
| \triangle J_{1}^{(m, N)} |^2
\right]
\le
\sum_{j=m}^{N-1}
\frac {{\bf E}[\omega(j)^2]}{j^{2\alpha-2\eta}}
\\
&& \times
\Biggl(
| \triangle \kappa |^{2 \eta}
+
{\bf E}
[| \triangle J^{(m)} |^2]
+
{\bf E}[
\sup_{m \le j \le N}
| \triangle J^{(m, j)} |^2]
+
| \triangle \kappa |^2
{\bf E}[
\sup_{m \le j \le N}
| R^{(j)} (\kappa_2) |^2]
\Biggr).
\eeq
Let
$p \ge 2$.
Taking
$p/2$-th power on both sides,
\beq
&&
{\bf E}\left[
| \triangle J_{1}^{(m, N)} |^2
\right]^{p/2}
\le
\left(
\sum_{j=m}^{N-1}
\frac {{\bf E}[\omega(j)^2]}{j^{2\alpha-2\eta}}
\right)^{p/2}
%
\Biggl\{
| \triangle \kappa |^{p\eta}
+
{\bf E}
[ | \triangle J^{(m)} |^2 ]^{p/2}
\\
&&\qquad\qquad\quad
+
{\bf E}\left[
\sup_{m \le j \le N}
| \triangle J^{(m, j)} |^2
\right]^{p/2}
%
+
| \triangle \kappa |^{p}
{\bf E}[
\sup_{m \le j \le N}
| R^{(j)} (\kappa_2) |^2]^{p/2}
\Biggr\}.
\eeq
Here we use the following facts. \\

\noindent
(i)
Burkholder-Davies-Gundy inequality (BDG) :
\beq
{\bf E}\left[
| \triangle J_1^{(m, N)} |^2
\right]^{p/2}
\simeq
{\bf E}
\left[
\sup_{m \le j \le N}
| \triangle J_1^{(m, j)} |^p
\right]
\eeq
(ii)
Using
$
{\bf E}[ |X| ]
\le
{\bf E}[ |X|^p ]^{1/p}
$
valid for
$p \ge 1$,
we have
\beq
{\bf E}\left[
\sup_{m \le j \le N}
\left|
\triangle J^{(m, j)}
\right|^2
\right]
&=&
{\bf E}\left[
\left|
\sup_{m \le j \le N}
\triangle J^{(m, j)}
\right|^2
\right]
\le
{\bf E}\left[
\left|
\sup_{m \le j \le N}
\triangle J^{(m, j)}
\right|^p
\right]^{2/p}.
\eeq
Using (i), (ii) above,
we have
\begin{eqnarray}
&&
{\bf E}\left[
\sup_{m \le j \le N}
| \triangle J_1^{(m, j)} |^p
\right]
\le
\left(
\sum_{j=m}^{N-1}
\frac {{\bf E}[\omega(j)^2]}{j^{2\alpha-2\eta}}
\right)^{p/2}
%
\Biggl\{
| \triangle \kappa |^{p\eta}
+
{\bf E}
[ | \triangle J^{(m)} |^p ]
\nonumber
\\
&& \qquad\qquad
+
{\bf E}\left[
\left|
\sup_{m \le j\le N}
\triangle J^{(m, j)}
\right|^p
\right]
+
| \triangle \kappa |^{p}
{\bf E}[
\sup_{m \le j \le N}
| R^{(j)} (\kappa_2) |^2]^{p/2}
\Biggr\}.
\label{Jpone}
\end{eqnarray}
(2)
Estimate for
$\triangle J_2$ :
by the argument
in the proof of Proposition
\ref{J2},
\beq
&&
\sup_{m \le j \le N}
| \triangle J_2^{(m, j)} |^2
\le
\left(
\sum_{j=m}^{N-1}
\frac {1}{j^{2 \alpha - \eta}}
\right)^2
\Biggl\{
| \triangle \kappa |^{2 \eta}
+
| \triangle J^{(m)} |^2
\\
&& \qquad\qquad\qquad
+
\sup_{m \le j \le N}
| \triangle J^{(m, j)} |^2
+
| \triangle \kappa |^2
\sup_{m \le j \le N}
| R^{(j)} (\kappa_2) |^2
\Biggr\}.
\eeq
Take the
$p/2$-th power,
and then expectation.
\begin{eqnarray}
&&
{\bf E}\left[
\sup_{m \le j \le N}
| \triangle J_2^{(m, j)} |^p
\right]
\le
\left(
\sum_{j=m}^{N-1}
\frac {1}{j^{2 \alpha - \eta}}
\right)^p
\Biggl\{
| \triangle \kappa |^{p\eta}
+
{\bf E}\left[
| \triangle J^{(m)} |^p
\right]
\nonumber
\\
&&\qquad
+
{\bf E}\left[
\sup_{m \le j \le N}
| \triangle J^{(m, j)} |^p
\right]
+
| \triangle \kappa |^p
{\bf E}\left[
\sup_{m \le j \le N}
| R^{(j)} (\kappa_2) |^p
\right]
\Biggr\}.
\label{Jptwo}
\end{eqnarray}
(3)
Putting together :
now the rest of the argument is quite similar to that of Proposition \ref{J2},
so that we omit the details.
\QED
\end{proof}
%

%
\section{Proof of Theorems}
The following
two propositions are the key ingredients of the proof of the clock convergence.
\begin{proposition}
\label{Theta}
(1)
Assume {\bf A}.
We then have
\[
\Theta^{(n)}(c)
\to c,
\quad
\mbox{for all}
\quad
c \in {\bf R},
\quad
a.s.
\]
(2)
Assume {\bf B}
and
let
$\beta > 0$
satisfies
$2 \eta \beta > 1$.
Then
\[
\Theta^{(k^{\beta})}(c)
\to c,
\quad
\mbox{for all}
\quad
c \in {\bf R},
\quad
a.s.
\]
\end{proposition}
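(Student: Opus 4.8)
The plan is to read off $\Theta^{(n)}(c)\to c$ a.s.\ from the H\"older-in-$\kappa$ estimates of Propositions~\ref{Jp} (case~\textbf{A}) and~\ref{J2} (case~\textbf{B}), via Chebyshev's inequality and the Borel--Cantelli lemma. The starting point is the reduction prepared in Section~2: writing $\triangle$ for the difference operator attached to $\kappa_1=\kappa_c=\kappa_0+c/n$ and $\kappa_2=\kappa_0$, so that $\triangle J^{(0,n)}=\int_0^n\bigl(e^{2i\theta_s(\kappa_c)}-e^{2i\theta_s(\kappa_0)}\bigr)V(s)\,ds$ and $|\triangle\kappa|=|c|/n$, the display following Lemma~\ref{Laplace} reads
\[
\Theta^{(n)}(c)-c=\frac{1}{2\kappa_0}\,\mathrm{Re}\,\triangle J^{(0,n)}+O(n^{-1}),
\]
so it suffices to prove $\triangle J^{(0,n)}\to0$ almost surely (along $n=\lfloor k^{\beta}\rfloor$ in part~(2)). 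Fix once and for all an integer $m_0$ large enough that the proofs of Propositions~\ref{J2} and~\ref{Jp} go through with $m=m_0$, and split $\triangle J^{(0,n)}=\triangle J^{(0,m_0)}+\triangle J^{(m_0,n)}$. Since $V$ and $\partial_\kappa\theta_s$ are bounded on the finite interval $[0,m_0]$, the first piece satisfies the crude deterministic bound $|\triangle J^{(0,m_0)}|\le C_{m_0}|\triangle\kappa|=C_{m_0}|c|/n\to0$, so only $\triangle J^{(m_0,n)}$ has to be controlled.

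For part~(1), under~\textbf{A}, Proposition~\ref{Jp} with $m=m_0$, $N=n$ and an exponent $p\ge2$ gives
\[
{\bf E}\bigl[\,|\triangle J^{(m_0,n)}|^{p}\,\bigr]\le{\bf E}\Bigl[\sup_{m_0\le t\le n}|\triangle J^{(m_0,t)}|^{p}\Bigr]\le C\,|\triangle\kappa|^{p\eta}=C\,|c|^{p\eta}\,n^{-p\eta}.
\]
Choosing $p\ge2$ with $p\eta>1$ (possible since $\eta>0$), Chebyshev's inequality makes $\sum_n{\bf P}\bigl(|\triangle J^{(m_0,n)}|>\varepsilon\bigr)$ finite for every $\varepsilon>0$, whence Borel--Cantelli gives $\triangle J^{(m_0,n)}\to0$ a.s., and therefore $\Theta^{(n)}(c)\to c$ a.s.\ for each fixed $c$. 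Since $\kappa\mapsto\theta_n(\kappa)$ is nondecreasing (Sturm--Pr\"ufer oscillation theory) and $c\mapsto\kappa_c$ is increasing, $c\mapsto\Theta^{(n)}(c)$ is nondecreasing; applying the previous step to all rational $c$ and then sandwiching an arbitrary $c$ between rationals $q_1<c<q_2$ yields $q_1\le\liminf_n\Theta^{(n)}(c)\le\limsup_n\Theta^{(n)}(c)\le q_2$, and $q_1\uparrow c$, $q_2\downarrow c$ promotes the convergence to all $c$ simultaneously.

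For part~(2), under~\textbf{B}, the argument is the same except that only the second moment is available: Proposition~\ref{J2} with $m=m_0$ and $N=\lfloor k^{\beta}\rfloor$ gives
\[
{\bf E}\bigl[\,|\triangle J^{(m_0,\lfloor k^{\beta}\rfloor)}|^{2}\,\bigr]\le{\bf E}\Bigl[\sup_{m_0\le t\le\lfloor k^{\beta}\rfloor}|\triangle J^{(m_0,t)}|^{2}\Bigr]\le C\,|\triangle\kappa|^{2\eta}\asymp C\,|c|^{2\eta}\,k^{-2\eta\beta}.
\]
The hypothesis $2\eta\beta>1$ is exactly what makes the right-hand side summable in $k$, so Chebyshev and Borel--Cantelli along $n=\lfloor k^{\beta}\rfloor$ give $\triangle J^{(m_0,\lfloor k^{\beta}\rfloor)}\to0$ a.s.\ for fixed $c$, and the monotonicity step above upgrades this to all $c$. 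In both parts one uses that the constants in Propositions~\ref{Jp},~\ref{J2} are uniform in $N$ (their proofs involve only convergent tails) and that the $\kappa^{-1}$ factors appearing there stay bounded, since $\kappa_c\to\kappa_0>0$.

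The main obstacle lies entirely in part~(2): lacking a BDG inequality under~\textbf{B} one is confined to the $L^2$ estimate of Proposition~\ref{J2}, and moreover $\eta$ is capped by $\alpha-\frac12$, so one cannot compensate by raising the moment as in part~(1); consequently the Borel--Cantelli series converges only along a sufficiently sparse subsequence, which is precisely why part~(2) is stated for $n=k^{\beta}$ with $2\eta\beta>1$ rather than for the whole sequence.
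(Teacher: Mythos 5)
Your argument follows the same route as the paper's: Chebyshev plus Borel--Cantelli applied to the H\"older-in-$\kappa$ moment estimates (Proposition~\ref{Jp} for $p$-th moments in case \textbf{A}, Proposition~\ref{J2} for second moments in case \textbf{B}), followed by the monotonicity trick to pass from rational $c$ to all $c$ simultaneously. Your explicit split $\triangle J^{(0,n)}=\triangle J^{(0,m_0)}+\triangle J^{(m_0,n)}$, with a deterministic bound on the finite head, is a reasonable way to reconcile the propositions' requirement that $m$ be large with the fact that the object of interest starts at $0$; the paper leaves this implicit.

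The one point you should tighten is the remainder term. You quote the Section~2 display
$\Theta^{(n)}(c)-c=\tfrac{1}{2\kappa_0}\mathrm{Re}\,\triangle J^{(0,n)}+O(n^{-1})$
and then discard the $O(n^{-1})$ as if it were a deterministic constant over $n$. In fact, unwinding $\tilde\theta_n(\kappa)=\tfrac{1}{2\kappa}\mathrm{Re}\,R^{(n)}(\kappa)$ gives the exact identity
\[
\Theta^{(n)}(c)-c
=
\frac{1}{2\kappa_c}\,\mathrm{Re}\,\triangle J^{(n)}
+
\Bigl(\frac{1}{2\kappa_c}-\frac{1}{2\kappa_0}\Bigr)\,\mathrm{Re}\,R^{(n)}(\kappa_0),
\]
so the ``$O(n^{-1})$'' is the random quantity $O(|\triangle\kappa|)\cdot R^{(n)}(\kappa_0)$, not a deterministic sequence. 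This is what the paper's own proof controls: it inserts this term into the Chebyshev bound and estimates $\mathbf{E}[|\triangle\kappa|^2|R^{(n)}(\kappa_0)|^2]$ (resp.\ the $p$-th moment) via Proposition~\ref{R2} (resp.\ \ref{Rp}), producing an additional $|\triangle\kappa|^2$ (resp.\ $|\triangle\kappa|^p$) that is subsumed in the final bound. Your conclusion is saved in the end because $R^{(n)}(\kappa_0)$ converges a.s.\ (Proposition~\ref{thetatilde}), so the product with $O(n^{-1})$ does tend to zero a.s.; but that fact is not free and either needs to be invoked explicitly or, better, the $R$-term should be carried through the Chebyshev estimate as the paper does.
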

\begin{proof}
\mbox{}\\
Proof of (2) :
Assume {\bf B}. 
Then 
\beq
\Theta^{(N)}(c) -c
&=&
\frac {1}{2 \kappa_c}
Re \;
\triangle J^{(N)}
+
\left(
\frac {1}{2 \kappa_c}
-
\frac {1}{2 \kappa_0}
\right)
Re \;
R^{(N)}(\kappa_0).
\eeq
By Propositions
\ref{R2}, \ref{J2},
we have
\beq
{\bf P}\left(
| \Theta^{(N)}(c) - c | \ge \epsilon
\right)
& \le &
\frac {C}{\epsilon^2}
{\bf E}
\left[
| \triangle J^{(N)} |^2
+
| \triangle \kappa |^2
| R^{(N)}(\kappa_0) |^2
\right]
\\
& \le &
\frac {C}{\epsilon^2}
\left(
| \triangle \kappa |^{2 \eta}
+
| \triangle \kappa |^2
\right)
\eeq
where we put
$\triangle\kappa
=
\kappa_c - \kappa_0
=
\frac cn$.
Take
$\beta > 0$
such that
$
2 \eta \beta > 1
$
and consider a subsequence of
$N$ ;
$N := k^{\beta}$.
Let
\[
B_{k, \epsilon}(c)
:=
\left\{
| \Theta^{(k^{\beta})}(c) - c |
\ge
\epsilon
\right\},
\quad
k = 1, 2, \cdots,
\quad
\epsilon > 0.
\]
Then 
by the Borel--Cantelli lemma,
${\bf P}\left(
\limsup_{k \to \infty} B_{k, \epsilon}(c)
\right) = 0$,
so that
\[
{\bf P}\left(
\bigcap_{l \ge 1}
\limsup_{k \to \infty} B_{k, \frac 1l}(c)
\right) = 0.
\]
Therefore
$\Theta^{(k^{\beta})}(c) \to c$,
a.s.
for any fixed
$c \in {\bf R}$.
Since
$h(c) := c$
is continuous ant non-decreasing,
\[
\Theta^{(k^{\beta})}(c)
\stackrel{k \to \infty}{\to} c
\]
on the compliment of the event
$\left(
\bigcup_{c \in {\bf Q}}
\bigcap_{l \ge 1}
\limsup_{k \to \infty} B_{k, \frac 1l}(c)
\right)$.\\
\noindent
Proof of (1)
Assume {\bf A}.
Then
Proposition \ref{Jp} yields
\beq
{\bf P}\left(
| \Theta^{(N)}(c) - c | \ge \epsilon
\right)
& \le &
\frac {C}{\epsilon^p}
{\bf E}
\left[
| \triangle J^{(N)} |^p
+
| \triangle \kappa |^p
| R^{(N)}(\kappa_0) |^p
\right]
\\
& \le &
\frac {C}{\epsilon^p}
\left(
| \triangle \kappa |^{p \eta}
+
| \triangle \kappa |^p
\right).
\eeq
Taking
$p \gg 1$
s.t.
$p \eta > 1$
would give us the a.s. convergence
without taking further subsequence.
\QED
\end{proof}
\begin{proposition}
\label{thetatilde}
For any fixed
$\kappa$,
\[
\tilde{\theta}_n (\kappa)
\stackrel{n \to \infty}{\to}
\tilde{\theta}_{\infty}(\kappa).
\]
\end{proposition}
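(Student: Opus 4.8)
The plan is to recognize $\tilde\theta_n(\kappa)$ as the real part of a truncation of the potential functional $R$, and then to read off convergence from the maximal estimate of Proposition \ref{R2}. The integral equation (\ref{integraleq}) gives, for every $n$,
\[
\tilde\theta_n(\kappa)=\frac{1}{2\kappa}\,\mathrm{Re}\,R^{(n)}(\kappa),\qquad
R^{(n)}(\kappa)=\int_0^n\bigl(e^{2i\theta_s(\kappa)}-1\bigr)V(s)\,ds ,
\]
so it suffices to prove that, for the fixed value of $\kappa$, $R^{(n)}(\kappa)$ converges almost surely as $n\to\infty$; the limit then defines $\tilde\theta_\infty(\kappa):=\frac{1}{2\kappa}\,\mathrm{Re}\,R^{(\infty)}(\kappa)$. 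Since $R^{(n')}(\kappa)-R^{(n)}(\kappa)=R^{(n,n')}(\kappa)$ for $n\le n'$, by the Cauchy criterion it is enough to show that $\sup_{t\ge m}|R^{(m,t)}(\kappa)|\to 0$ as $m\to\infty$.

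First I would apply Proposition \ref{R2} with arbitrary $N\ge m$ and let $N\to\infty$: the right-hand side there is non-decreasing in $N$, so by monotone convergence
\[
{\bf E}\Bigl[\sup_{m\le t<\infty}|R^{(m,t)}(\kappa)|^2\Bigr]\le\Phi(m),
\]
where $\Phi(m)$ denotes that right-hand side with every sum $\sum_{j=m}^{N-1}$ replaced by $\sum_{j=m}^{\infty}$. The next step is to verify that $\Phi(m)<\infty$ and $\Phi(m)\to 0$ as $m\to\infty$. This is where the standing hypothesis and the available latitude enter: $\alpha>1/2$ makes $\sum_j j^{-2\alpha}\log j$ convergent, which disposes of the first and the last terms of $\Phi(m)$; taking the constant $c$ in the Pr\"ufer decomposition large enough that $c\rho>2-\alpha$ and $c\rho+\alpha>1$ disposes of the second and third terms; and choosing the exponent $\beta$ with $0<\beta<\alpha$ makes $(d\,m^{\beta}/m^{\alpha})^2=d^2\,m^{2(\beta-\alpha)}\to 0$, which handles the fourth term. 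Hence $\Phi(m)\to 0$.

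It remains to pass from this $L^2$ (hence in-probability) statement to an almost sure one. I would pick a subsequence $m_l\uparrow\infty$ with $\Phi(m_l)\le 2^{-2l}$; then by Chebyshev's inequality ${\bf P}\bigl(\sup_{t\ge m_l}|R^{(m_l,t)}(\kappa)|>2^{-l/2}\bigr)\le 2^{l}\Phi(m_l)\le 2^{-l}$, so Borel--Cantelli gives that, almost surely, $\sup_{t\ge m_l}|R^{(m_l,t)}(\kappa)|\le 2^{-l/2}$ for all large $l$. For $m_l\le s\le s'$ this yields $|R^{(s')}(\kappa)-R^{(s)}(\kappa)|\le|R^{(m_l,s')}(\kappa)|+|R^{(m_l,s)}(\kappa)|\le 2\cdot 2^{-l/2}$, so $\{R^{(t)}(\kappa)\}_{t\ge 0}$ is almost surely Cauchy as $t\to\infty$ and converges a.s.\ to a finite limit $R^{(\infty)}(\kappa)$; taking real parts and dividing by $2\kappa$ completes the proof. (Under hypothesis {\bf A} one can alternatively observe that the two ``martingale parts'' $R_1,R_3$ of the decomposition in the proof of Proposition \ref{Rp} are $L^2$-bounded martingales in $n$, hence converge a.s., while the remaining part is absolutely convergent.) The only real work is the bookkeeping of the middle paragraph --- checking that \emph{every} term in Proposition \ref{R2} is summable in $j$ and vanishes as $m\to\infty$, which is exactly where $\alpha>1/2$ together with the freedom to take $c$ large and $\beta<\alpha$ is used; the rest is standard.
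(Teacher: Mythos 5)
Your proof is correct and follows the same essential strategy as the paper: reduce to showing $R^{(n)}(\kappa)$ is almost surely Cauchy, apply the maximal estimate of Proposition \ref{R2}, then combine Chebyshev with Borel--Cantelli. The only organizational difference is that the paper applies Proposition \ref{R2} on dyadic blocks $[2^k,2^{k+1}]$ and sums the resulting $1/k^2$ block bounds, whereas you let $N\to\infty$ by monotone convergence to get a tail bound $\Phi(m)$ for $\sup_{t\ge m}|R^{(m,t)}|$ and then select a subsequence $m_l$ with $\Phi(m_l)\le 2^{-2l}$. Both routes arrive at the same Cauchy conclusion with the same ingredients; your version makes the bookkeeping (verifying each term of Proposition \ref{R2} is summable under $\alpha>\tfrac12$, $c$ large, $\beta<\alpha$) a bit more explicit than the paper does.
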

\begin{proof}
Since
$\tilde{\theta}_n (\kappa)
=
\frac {1}{2 \kappa}
Re \; R^{(n)}(\kappa)$,
it suffices to show the convergence of
$R^{(n)}(\kappa)$.
By Proposition \ref{R2},
\beq
{\bf E} \left[
\sup_{2^k \le n \le 2^{k+1}}
| R^{(2^k, n)}(\kappa) |^2
\right]
&\le&
C
\sum_{2^k \le j \le 2^{k+1}}
\frac {1}{j^{2 \alpha - \epsilon}}
+
\left(
\sum_{2^k \le j \le 2^{k+1}}
\frac {1}{j^{2 \alpha - \epsilon}}
\right)^2
+
\frac {1}{
2^{2(\alpha - \eta - \beta)k}}
\\
& \le &
\frac {C}{2^{(2 \alpha-1-\epsilon)k}}
\eeq
for
$\epsilon > 0$
sufficiently small, by taking
$\eta+ \beta$
sufficiently small.
By
Chebyshev's inequality,
\beq
{\bf P}\left(
\sup_{2^k \le n \le 2^{k+1}}
| R^{(2^k, n)}(\kappa) |^2
\ge
\frac {1}{k^4}
\right)
& \le &
C
\frac {k^4}{2^{(2 \alpha-1 - \epsilon)k}}.
\eeq
By the Borel-Cantelli lemma,
with probability one
\beq
\sup_{2^k \le n \le 2^{k+1}}
| R^{(2^k, n)}(\kappa) |
\le
\frac {1}{k^2}
\eeq
for
$k \gg 1$,
implying that
$\{ R^{(n)}(\kappa)\}_n$
is Cauchy.
\QED
\end{proof}
\begin{lemma}
\label{inverse}
{\bf (\cite{KN1}, Lemma 3.3)}
Let
$\Psi_n, n=1,2, \cdots$,
and
$\Psi$
are continuous and increasing functions such that
$\lim_{n \to \infty}\Psi_n(x)=\Psi(x)$
pointwise.
If
$y_n \in Ran \Psi_n$,
$y \in Ran \Psi$
and
$y_n \to y$,
then it holds that
\[
\Psi_n^{-1}(y_n)
\stackrel{n \to \infty}{\to}
\Psi^{-1}(y).
\]
\end{lemma}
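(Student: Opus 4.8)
The plan is to prove this as a standard real-analysis fact about inverses of convergent monotone functions, so the argument is short and purely topological; no probabilistic or spectral input is needed. Write $x_n := \Psi_n^{-1}(y_n)$ and $x := \Psi^{-1}(y)$, which are well-defined since $y_n \in \mathrm{Ran}\,\Psi_n$, $y \in \mathrm{Ran}\,\Psi$, and each $\Psi_n$, $\Psi$ is strictly increasing (hence injective) and continuous. The goal is to show $x_n \to x$.

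First I would establish boundedness of $\{x_n\}$, so that it suffices to identify the limit of every convergent subsequence. Fix $\varepsilon > 0$ and consider the points $\Psi(x-\varepsilon) < \Psi(x) = y < \Psi(x+\varepsilon)$, using strict monotonicity of $\Psi$. By pointwise convergence $\Psi_n(x-\varepsilon) \to \Psi(x-\varepsilon)$ and $\Psi_n(x+\varepsilon) \to \Psi(x+\varepsilon)$, together with $y_n \to y$, for all large $n$ we have $\Psi_n(x-\varepsilon) < y_n < \Psi_n(x+\varepsilon)$. Applying the increasing function $\Psi_n^{-1}$ gives $x-\varepsilon < x_n < x+\varepsilon$ for all large $n$, i.e. $|x_n - x| < \varepsilon$ eventually. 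Since $\varepsilon$ was arbitrary, $x_n \to x$.

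The only point requiring a little care is that $\Psi_n^{-1}$ must be applied at the values $y_n$, and that $y_n$ genuinely lies strictly between $\Psi_n(x-\varepsilon)$ and $\Psi_n(x+\varepsilon)$ once $n$ is large enough; this is where one uses the strict inequalities $\Psi(x\mp\varepsilon)\neq y$ coming from strict monotonicity of the limit $\Psi$, so that there is room to absorb the errors $|\Psi_n(x\pm\varepsilon)-\Psi(x\pm\varepsilon)|$ and $|y_n-y|$. (If $\Psi$ were merely non-decreasing one would instead work with one-sided inverses, but here $\Psi$ is assumed increasing.) I expect no real obstacle: the main thing is simply to be careful that all three convergences — of $\Psi_n$ at $x-\varepsilon$, of $\Psi_n$ at $x+\varepsilon$, and of $y_n$ — are invoked simultaneously for $n$ large. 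This is exactly the content of Lemma 3.3 of \cite{KN1}, to which one may alternatively just refer.
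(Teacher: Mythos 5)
Your proof is correct and is the standard elementary argument for this fact. The paper itself offers no proof of this lemma — it is stated with a citation to Lemma 3.3 of \cite{KN1} and used as given — so there is nothing to compare against; your argument fills in exactly what the reference supplies. The one presentational quibble is that your first paragraph announces you will "establish boundedness of $\{x_n\}$ ... and identify the limit of every convergent subsequence," but the argument you actually carry out is more direct and does not pass through subsequences at all: you show $|x_n - x| < \varepsilon$ eventually for each $\varepsilon$, which is the convergence itself. You could simply drop that framing sentence.
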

{\it Proof of Theorems \ref{clock1}, \ref{clock2}}\\
In the representation
of the Laplace transform of
$\xi_n$
(Lemma \ref{Laplace}),
we use
Propositions \ref{Theta}, \ref{thetatilde},
and
Lemma \ref{inverse}.
\QED\\
%

For the proof of Theorem \ref{strongclock},
for given
$n$,
rearrange the eigenvalues
$\{ \kappa_k (n) \}$
of
$H_n$
such that
\[
\cdots <
\kappa'_{0}(n) < \kappa_0 < \kappa'_1(n) < \cdots
\]
\begin{lemma}
\label{estimate}
For any fixed
$k$,
$\kappa'_k(n)
=
\kappa_0 + o(1),
\quad
n \to \infty$.
\end{lemma}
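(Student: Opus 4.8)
The plan is to control the eigenvalues $\kappa'_k(n)$ near the reference point $\kappa_0$ through the Prüfer phase $\theta_n(\kappa)$, exactly as in the representation underlying Lemma \ref{Laplace}. Recall that by Sturm's oscillation theorem, the eigenvalues of $H_n$ in the variable $\kappa$ are precisely the solutions of $\theta_n(\kappa) = j\pi$, $j \in {\bf N}$, and since $\kappa \mapsto \theta_n(\kappa)$ is continuous and strictly increasing, the rearranged eigenvalues $\kappa'_k(n)$ are obtained by locating, among the level sets $\{\theta_n(\kappa) = j\pi\}$, the finitely many ones straddling $\kappa_0$. Writing $\theta_n(\kappa) = \kappa n + \tilde\theta_n(\kappa)$, the relevant quantities are governed by $\Theta^{(n)}(c) = \theta_n(\kappa_0 + c/n) - \theta_n(\kappa_0)$, which by Proposition \ref{Theta} converges to $c$ (a.s., along the full sequence in case {\bf A}, along the subsequence $n = n_k$ otherwise).

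First I would fix $k$ and a small $\varepsilon > 0$, and consider $\kappa_\varepsilon^\pm := \kappa_0 \pm \varepsilon$. The claim $\kappa'_k(n) = \kappa_0 + o(1)$ is equivalent to the statement that, for all large $n$, the number of eigenvalues of $H_n$ in the interval $(\kappa_0, \kappa_0 + \varepsilon)$ is at least $k$ and similarly on the left; equivalently, that $\theta_n(\kappa_0 + \varepsilon) - \theta_n(\kappa_0) \geq k\pi + O(1)$ grows without bound. But from the decomposition $\theta_n(\kappa_0+\varepsilon) - \theta_n(\kappa_0) = \varepsilon n + \big(\tilde\theta_n(\kappa_0+\varepsilon) - \tilde\theta_n(\kappa_0)\big)$, Proposition \ref{thetatilde} shows that $\tilde\theta_n(\kappa_0 \pm \varepsilon)$ and $\tilde\theta_n(\kappa_0)$ each converge to a finite limit, so the bracketed term stays bounded while $\varepsilon n \to \infty$. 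Hence for every fixed $k$ there are eventually more than $k$ eigenvalues of $H_n$ strictly between $\kappa_0$ and $\kappa_0 + \varepsilon$, and likewise on the left; this forces $|\kappa'_k(n) - \kappa_0| < \varepsilon$ for all large $n$. Since $\varepsilon$ is arbitrary, $\kappa'_k(n) \to \kappa_0$.

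To make this quantitative and recover the $o(1)$ rate rather than a bare limit, I would instead invoke Proposition \ref{Theta}(1) directly: under {\bf A}, $\Theta^{(n)}(c) \to c$ a.s. for each $c$, so for any $c$ the number of $j$ with $j\pi \in (\theta_n(\kappa_0), \theta_n(\kappa_0 + c/n))$ equals $\lfloor \Theta^{(n)}(c)/\pi + O(1/n)\rfloor \to \lfloor c/\pi \rfloor$ up to the fractional-part ambiguity; choosing $c$ of order $k\pi$ pins down $\kappa'_k(n)$ to within $O(1/n)$ of $\kappa_0 + c/n$, and then $c/n \to 0$ gives $\kappa'_k(n) = \kappa_0 + o(1)$. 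The main obstacle is bookkeeping the boundary fractional part $\{\theta_n(\kappa_0)\}_{\pi{\bf Z}}$: the count of eigenvalues in a window depends on where $\theta_n(\kappa_0)$ sits modulo $\pi$, so one must phrase everything as a one-sided count and use that $\Theta^{(n)}(c)$ exceeds any prescribed multiple of $\pi$ for $c$ large enough and $n$ large enough, rather than trying to count exactly. This is harmless for a fixed $k$ but is precisely the reason the strong clock statement, Theorem \ref{strongclock}, then needs the additional spacing argument on top of this lemma.
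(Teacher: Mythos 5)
Your argument is correct but takes a more roundabout route than the paper, and it invokes heavier tools than are needed. The paper's proof is a short, purely deterministic computation: since $\theta_n(\kappa'_k(n))$ is, by definition, the integer multiple of $\pi$ lying exactly $k$ levels past $\theta_n(\kappa_0)$, one writes
\[
\kappa'_k(n)\, n + \tilde\theta_n(\kappa'_k(n)) = \theta_n(\kappa'_k(n)) = (m_n + k)\pi, \qquad \kappa_0 n = m_n\pi + \beta_n,
\]
solves algebraically for $\kappa'_k(n)-\kappa_0$, and observes that the correction is $\bigl(-\beta_n + k\pi - \tilde\theta_n(\kappa'_k(n))\bigr)/n$, which tends to $0$ because (\ref{integraleq}) together with $|V(s)|\le C s^{-\alpha}$ gives the elementary, path-wise bound $\tilde\theta_n(\kappa)=o(n)$ uniformly in $\kappa$ bounded away from $0$. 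You instead count eigenvalues in $(\kappa_0, \kappa_0+\varepsilon)$ via $\theta_n(\kappa_0+\varepsilon)-\theta_n(\kappa_0) = \varepsilon n + O(1)$, and for the $O(1)$ term you appeal to Proposition \ref{thetatilde}, which is a probabilistic (Borel--Cantelli) a.s.\ convergence result. That yields the lemma only almost surely, whereas the paper's version is deterministic; moreover, all you really need from $\tilde\theta_n$ is $o(n)$, not convergence, so citing Proposition \ref{thetatilde} is overkill. Your second paragraph, attempting a rate, is also somewhat off: the error you inherit from $\tilde\theta_n$ is $O(n^{-\alpha})$, not $O(n^{-1})$, unless $\alpha\ge 1$; and the discussion of $\Theta^{(n)}(c)\to c$ with $c$ "of order $k\pi$" conflates a fixed-$c$ limit with a uniform-in-$c$ statement. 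None of these issues is fatal for the qualitative $o(1)$ claim, but the paper's direct algebraic substitution is both simpler and stronger.
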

\begin{proof}
By definition of
$\kappa'_k(n)$,
\begin{equation}
(m_n+ k) \pi
=
\theta_n (\kappa'_k(n))
=
\kappa'_k(n) n + \tilde{\theta}_n (\kappa'_k(n))
\label{star}
\end{equation}
Write
$\kappa_0 n = m_n \pi + \beta_n$,
$m_n \in {\bf N}$,
$\beta_n \in [0, \pi)$.
Substituting
$m_n \pi = \kappa_0 n - \beta_n$
into
$(\ref{star})$
yields
\beq
\kappa'_k(n)
=
\kappa_0
+
\frac {
- \beta_n + k \pi - \tilde{\theta_n}(\kappa'_k(n))
}
{n}
\eeq
By
(\ref{integraleq}),
$n^{-1} \tilde{\theta}_n (\kappa'_k(n)) \to 0$.
\QED
\end{proof}
{\it Proof of Theorem \ref{strongclock}}
\beq
\theta_n (\kappa'_j(n))
&=&
[ \theta_n (\kappa_0) ]_{\pi}
+ j \pi
\\
\theta_n (\kappa'_{j+1}(n))
&=&
[ \theta_n (\kappa_0) ]_{\pi}
+ (j+1) \pi
\eeq
from which we have
\beq
(\kappa'_{j+1}(n) - \kappa'_j(n))n
-
(\tilde{\theta}_{n}(\kappa'_{j+1}(n)) -\tilde{\theta}_{n}(\kappa'_j(n)))
=
\pi.
\eeq
%
%
%
Here
we note that, by Proposition \ref{Jp}, the family
$\{ J^{(N)} (\kappa) \}$
is tight as continuous function-valued process.
Hence
$\tilde{\theta}_n (\kappa) \to \tilde{\theta}_{\infty}(\kappa)$
locally uniformly w.r.t.
$\kappa$.
Since
$\kappa'_k (n) \to \kappa_0$
by
Lemma \ref{estimate}, we have
$\tilde{\theta}_n(\kappa'_k(n))
\to
\tilde{\theta}_{\infty}( \kappa_0 )$.
\QED
%

\section{Deterministic potentials}
\label{intro}
\subsection{Symbolic dynamical systems}
Let
${\cal A} = (a_1, \ldots, a_M)$
be an abstract finite set ("alphabet"),
and consider the
probability spaces
$(\Omega, {\cal F}, {\bf P})$,
$(\Omega_+, {\cal F}_+, {\bf P}_+)$
where
\[
\Omega = {\cal A}^{\bf Z}, \; \Omega_+ = {\cal A}^{\bf N},
\]
${\cal F}$
(resp., ${\cal F}_+$)
is the sigma-algebra generated by the cylinder subsets
\[
C_{i_1, \ldots, i_k}(A_1, \ldots, A_k)
:=
\{\omega: \, \omega_{i_j}\in A_j, \, j=1, \ldots, k\}
\]
(in the case of
${\cal F}_+$,
the indices $i_j$ are non-negative), and
${\bf P}$
(resp., ${\bf P}_+$)
is a probability measure on
${\cal F}$
(resp., on ${\cal F}_+$)
invariant under the shift endomorphism
(isomorphism, in the case of $\Omega$)
$T$
defined by
\[
(T \omega)_i = \omega_{i+1}.
\]
For brevity,
below we often write
$\Omega_\bullet$,
${\cal F}_\bullet$
etc., where
$\bullet$
is "nothing" or "$+$".
In all cases,
we keep the same notation for the shift transformation
$T$.

In a number
of interesting applications, the pair
$({\bf P}_\bullet,T)$
is markovian, i.e.,
${\bf P}_\bullet$
is a Markov measure w.r.t.
$T$:
\[
{\bf P}\{
\omega_{t+1}= a
\,|\,
{\cal F}_{\le t } \}
=
{\bf P}\{
\omega_{t+1} = a \,|\,
{\cal F}_{= t }
\}
\]
where
${\cal F}_{\le t}$
(resp., ${\cal F}_{= t}$)
is generated by the values of the symbols
$\omega_i$
with
$i\le t$
(resp., $i=t$).
Equivalently,
for any
$a, b, b_{-1}, \ldots \in {\cal A}$,
one has
\[
{\bf P}\{
\omega_{t+1} = a
\,|\,
\omega_t = b, \omega_{t-1} = b_{-1}, \ldots
\}
=
{\bf P}\{
\omega_{t+1} = a
\,|\, \omega_t = b
\} =
\Pi_{ab}~,
\]
for some stochastic matrix
$\Pi = (\Pi_{ab})$,
with
$\sum_a \Pi_{ab}=1$.
A particular
subclass of Markov systems is formed by the Bernoulli shifts, where
${\bf P}_\bullet$
is a product measure ${\bf P}_\bullet = \mu^{{\bf Z}_\bullet}$, and $\mu$ is a probability measure
on ${\cal A}$ endowed with the maximal sigma-algebra containing all singletons $\{a\}$, $a\in{\cal A}$.
%
\subsection{Symbolic representations for some hyperbolic systems}
\subsubsection{Dyadic expansion of the unit circle}
\label{sssec:dyadic}
Here
$\Omega = {\bf T}^1 = {\bf R}/{\bf Z}$,
and
we identify it, as a measure (in fact, probability)
space with the interval
$[0,1)$
endowed with the Haar (Lebesgue, in this case) measure
${\bf P}$.
Consider
the measurable transformation
$T: \Omega \to \Omega$
defined by
\[
T: x \mapsto \{2x\} \equiv 2x \pmod 1.
\]
The Lebesgue
measure is
$T$-invariant: for any measurable subset
$A\subset \Omega$,
${\bf P}{ T^{-1} A} = {\bf P}{A}$.
It suffices
to check the latter identity for the intervals
$A = [x,y)$
where it is obvious, since
\[
T^{-1} [x,y) =
\left[ \frac{x}{2}, \frac{y}{2} \right)
\cup
\left[ \frac{x}{2} + \frac 12, \frac{y}{2} + \frac 12 \right),
\]
and each of the two disjoint intervals in the above RHS has length $(y-x)/2$.
Naturally,
$T$
is only an endomorphism, but not isomorphism, for it is not invertible, so it generates a semi-group
$\{T^t, \, t\in {\bf Z}_+\}$.

The standard
symbolic representation for this dynamical system is obtained with the help of the binary expansion of the real numbers
$x\in[0,1)$,
\[
x =
\sum_{i=0}^{\infty}
\frac{\omega_i}{2^{i+1}},
\]
so the identification
$x$
with the infinite word
$(\omega_0, \omega_1, \ldots)\in\{0,1\}^{{\bf Z}_+}$
is a bijection, if one excludes the words having an infinite tail of the form
$(\ldots, \omega_n, 1, 1, 1, \ldots)$,
using the identity
\[
\sum_{i=n+1}^\infty 2^{-i-1} = 2^{-n-1}.
\]
For example,
$(0, 1, 1, 1, \ldots)$
and
$(1, 0, 0, 0, \ldots)$ are two dyadic representations
of the number
$\frac 12$.
We only
define the transformations defined (and, where applicable, invertible) Lebesgue-a.e.

It is
straightforward that
$T$
becomes the left shift on the set of the words
$\omega = (\omega_0, \omega_1, \ldots)$.
%
\subsubsection{Baker's transform}
\label{Baker}
Baker's transform,
or baker's map (N.B.: here "baker" is not a family name  but merely a profession)
is a particular realization of the Bernoulli shift considered in Sect.~\ref{sssec:dyadic}.
From the
symbolic dynamics point of view, it is
obtained from the dyadic expansion of the circle by a canonical procedure extending an endomorphism
(with time given by a semi-group ${\bf N} = {\bf Z}_+$)
to an isomorphism (invertible measure-preserving
transformation with time ${\bf Z}$).
Curiously,
the geometrical realization is quite simple:
$T = {\cal C} \circ {\cal E}$,
where
\[
{\cal E}(x, y) = (2x, y/2)~,
\]
so that
${\cal E}\big([0,1]^2\big) = [0,2]\times [0, \frac 12]$,
and
\begin{equation}\label{eq:def.baker}
{\cal C}(x,y) =
\left\{
\begin{array}{cc}
(x,y), &  (0 \le x < 1)
\\
(x-1, y +\frac 12), & \mbox{ (otherwise)}.
\end{array}
\right.
\end{equation}
The second
stage consists, geometically, in cutting the rectangle
$[0,2]\times [0, \frac 12]$
into two halves, then leaving
$[0,1]\times [0, \frac 12]$
invariant and putting
$[1,2]\times (0, \frac 12]$
on top of the first rectangle.
To obtain
a symbolic dynamics representation
$T_{\cal A}$
of
$T$, use the dyadic expanstions
\[
x = \sum_{i=0}^{+\infty} \omega_i 2^{-i-1}\,, \quad
y = \sum_{i=1}^{+\infty} \omega_{-i} 2^{-i}
\]
and set
\[
\Phi: (x,y) \mapsto (\ldots, \omega_{-2}, \omega_{-1}, \omega_{0}, \omega_{1}, \ldots).
\]
Then
$T_{\cal A} = \Phi \circ T \circ \Phi^{-1}$
is the left shift on infinite words
$\omega\in\{0,1\}^{\bf Z}$.
Indeed,
on the
$x$-coordinate
$T={\cal C}\circ {\cal E}$
acts exactly as the dyadic extension, since
${\cal E}$
acts so, while
${\cal C}$
adds to the
$x$-coordinate either
$0$
or
$-1 = 0 \pmod 1$.
The dyadic
digits of
$y$,
shift to the left, for
${\cal E}$
is multiplication by
$1/2$
in the
$y$-direction;
this determines all digits of the image
$T(x,y)$
with negatives indices except the place no.
$(-1)$.
As to this symbol, the definition
(\ref{eq:def.baker})
clearly shows that
it equals
$0$ if
$x<1/2$,
i.e., if
$\omega_0=0$,
and
$1$
otherwise, so in both cases it is given by
$\omega_0$.

It is readily
seen that
$\omega_0(x,y)$,
as function of the phase point
${\bf u} = (x,y)\in{\bf T}^2$,
is
merely the indicator function of the rectangle
$C_0:=[0,\frac 12)\times [0,1)$.
Respectively,
\[
\omega_t({\bf u}) = 1_{C_0}({\bf u}), \;\; t\in{\bf Z}.
\]
Equivalently,
introducing the partition
$\Omega = C_0 \sqcup C_1$,
$C_1 = \Omega\setminus C_0$,
one can identify the word
$(\omega_t({\bf u}), t\in {\bf Z})$
with the sequence of the ordinal numbers
of the partition elements visited by the trajectory
$\{T^t {\bf u}\}$.
Since
$T$
shrinks the vertical coordinate
$y$
by the factor
$1/2$,
and
$T^{-1}$
does the same to the horizontal coordinate
$x$,
the cylinder sets
\[
\bigcap_{t=-n}^n
\{{\bf u}:\, T^t {\bf u} \in C_{a_t}\}, \;\; a_t \in\{0,1\},
\]
have exponentially decaying diameter as
$n\to \infty$.
%
\subsubsection{Algebraic automorphisms of tori}
\label{tori}
In the general case
$\nu \ge 2$
, the construction of Markov partitions for hyperbolic
toral automorphisms was proposed by Sinai \cite{Sin68}.
This construction
is rather technical
and particularly tedious for the tori of dimension
$\nu >2$
, so we we give only an upshot
in the case
$\nu=2$,
and refer the interested reader to the original paper \cite{Sin68}
and to the books on ergodic theory,
e.g.,
\cite{Bow75,HW09,KFS82,Man87,Shu86}.

Consider
a unimodular matrix with integer entries
\[
M =\left(
       \begin{array}{cc}
         a & b \\
         c & d \\
       \end{array}
     \right) \in SL(2, {\bf Z}),
\]
thus having eigenvalues
$(\lambda_1, \lambda_2) = (\lambda, \lambda^{-1})$,
and assume that
$|\lambda|>1$,
so the
modulus of the second eigenvalue is smaller than $1$.
The most famous example is
\begin{equation}
\label{eq:cat.map}
M =\left(
       \begin{array}{cc}
         2 & 1 \\
         1 & 1 \\
       \end{array}
     \right),
     \quad
     \lambda_{1,2} = \frac{3 \pm \sqrt{5}}{2}.
\end{equation}
Since
$M$,
acting in
${\bf R}^2$
by multiplication, maps the lattice
${\bf Z}^2 \hookrightarrow {\bf R}^2$
into itself, it also acts on the factor-space
${\bf R}^2/{\bf Z}^2 = {\bf T}^2$.
In the case
(\ref{eq:cat.map})
it is usually called Arnold's Cat Map.
The inequalities
$|\lambda^{-1}| < 1 < |\lambda|$
mean that
$M$
is hyperbolic: it has an extending and contracting
eigenspaces. An astute geometrical procedure allows one to partition the torus into a finite
union of parallelepipeds
$C_a$,
$a \in {\cal A} = \{a_1, \ldots, a_N\}$,
with sides parallel to the extending and contracting eigenspaces, in such a way that
\begin{itemize}
  \item
  for a.e.
  ${\bf u} = (x,y)\in {\bf T}^2$, the sequence of symbols $(a_{k(t)}, t\in {\bf Z})$
  such that
\[
T^t {\bf u} \in C_{a_{k(t)}}, \;\; t\in{\bf Z},
\]
  determines the point
  ${\bf u}$
  uniquely; in other words, the torus point
  ${\bf u}$
   is identified
  with the sequence of the ordinal numbers of the parallelepipeds it visits under the dynamics
  $\{T^t\}$;

  \item
  under
  the above identification, the Lebesgue measure on
  ${\bf Z}^2$
  corresponds to a Markov measure w.r.t. the shift
  $T$:
  writing
  ${\bf u} \leftrightarrow (\ldots, \omega_{-1},
  \omega_0, \omega_1, \ldots)$,
  one has
\[
{\bf P}
\{ \omega_{t+1} = a \,|\, \omega_t = b, \omega_{t-1} = b_{-1}, \ldots
\}
=
{\bf P}
\{ \omega_{t+1} = a \,|\, \omega_t = b
\} = \Pi_{ab},
\]
for some irreducible stochastic matrix
$\Pi$.

Such a partition is called a Markov partition for the dynamical system
$(\Omega,{\cal F},{\bf P},T)$.

\end{itemize}
%
\subsection{Local regularity, quasi-locality and decay of correlations}

For our purposes, the key feature of the Markov partitions is exponential decay
of the diameter of a cylinder set in the torus
\[
{\cal X}_{(-n, \dots, n)}(\alpha_{-n}, \ldots \alpha_n)
=
\{x(\omega) \in {\bf T}^\nu:\, \omega_i = \alpha_{j}, -n \le i \le n \}
\]
as
$n\to +\infty$.
The geometrical
mechanism of this decay is essentially the same as for the
baker's map, although the decay exponent is determined by the eigenvalues of the generating
linear map
${\cal L}$.
Therefore,
for any two points
$x,y$
whose symbolic representations ("letters"
$\omega_i(x)$
and
$\omega_i(y)$)
agree on a long interval of indices,
$i\in\{-n -n+1, \ldots, n\}$,
we have
\begin{equation}
\label{eq:dist.x.y}
\mbox{dist}_{{\bf T}^\nu}(x,y) \le q^n, \; q\in(0,1)~.
\end{equation}
Consequently,
for any continuous function
$f:\,{\bf T}^\nu\to {\bf R}$
with continuity modulus
\[
s_f(\epsilon) :=
\sup_{
y\in B_{\epsilon}(x)
} \,
\big| f(y) - f(x)\big|
\]
one has for the points
$x,y$
satisfying
(\ref{eq:dist.x.y})
\[
\big| f(y) - f(x)\big|
\le
s_f( q^n)~.
\]
In particular, for any H\"{o}lder continuous function of order $\beta\in(0,1]$, we have
$$
\big| f(y) - f(x)\big| \le C \tilde q^n~, \;\; \tilde q = q^\beta \in(0,1).
$$

In all
considered examples, the existence of a Markov partition gives rise to the quasi-locality
of the deterministic random potentials as functions of symbols in the infinite words
$(\omega_i)_{i\in {\bf Z}_\bullet}$.

Introduce
the following notation: for a function
$f: {\cal A}^{{\bf Z}_\bullet} \to {\bf R}$
and an index subset
$I\subset {\bf Z}_\bullet$,
let
\begin{equation}
\mbox{Var}_I (f) :=
\sup_{ \omega', \omega:\,
\pi_{I}\omega'
=
\pi_{I}\omega}
\big| f(\omega') - f(\omega) \big| ,
\end{equation}
where
$\pi_I(\omega)$
is the finite sub-word of
$\omega$
formed by the letters
$(\omega_i, \, i\in I)$.

\begin{definition}
A function
$f: {\cal A}^{{\bf Z}_\bullet} \to {\bf R}$
is called quasi-local if there exists
$C=C(f)>0$
and
$q\in(0,1)$
such that for any $
n\ge 1$
and any finite word
$(\omega_{-n}, \omega_{-n+1}, \ldots, \omega_n)$
\begin{equation}
\mbox{Var}_{[-m,n]}(f)
\le C q^{m\wedge n}.
\end{equation}
\end{definition}

In turn,
the quasi-locality implies exponential decay
of correlations (this decay may be slower for the sampling functions
$f$
featuring lower regularity that
H\"{o}lder continuity).

The bottom line
is that in the above mentioned examples of hyperbolic dynamical systems
on tori
${\bf T}^\nu \cong [0,1)^\nu \subset {\bf R}^\nu$,
the corresponding deterministic potentials
feature a fast decay of correlations sufficient for the extension of the Kolmogorov's
connection between the convergence in mean square and the a.s. convergence of the random series
\[
\sum_{j=1}^\infty \frac{v_j}{j^\alpha},
\;\;
\alpha\in \left( \frac 12, 1 \right].
\]
%

\noindent {\bf Acknowledgement }
The authors
would like to thank the Isaac Newton Institute for
Mathematical Sciences for its hospitality during the programme
``Periodic and Ergodic Spectral Problems"
supported by EPSRC Grant Number EP/K032208/1.
This work is partially supported by
JSPS KAKENHI Grant
Number 26400145(F.N.)
%

%

%
\small


\end{document}